\documentclass[12pt, draftclsnofoot, onecolumn]{IEEEtran}

\usepackage{amsfonts}
\usepackage{amssymb}
\usepackage{stfloats}
\usepackage{graphicx}
\usepackage{subfigure}
\usepackage{amsmath}
\usepackage{array}
\usepackage{epstopdf}
\usepackage{graphicx}
\usepackage{amsthm}
\usepackage{algorithm}
\usepackage{algpseudocode}
\usepackage{amsmath}
\usepackage{graphics}
\usepackage{epsfig}
\usepackage{cite}

\usepackage[usenames]{color}

\newtheorem{Definition}{Definition}
\newtheorem{Problem}{Problem}
\newtheorem{Lemma}{Lemma}

\newtheorem{Algorithm}{Algorithm}
\newtheorem{Policy}{Policy}

\newtheorem{Remark}{Remark}

\newtheorem{Example}{Example}

\begin{document}

\title{Scheduling for Mobile Edge Computing with Random User Arrivals-- An Approximate MDP and Reinforcement Learning Approach}

\author{
	\IEEEauthorblockN {Shanfeng Huang\IEEEauthorrefmark{1}\IEEEauthorrefmark{2}, Bojie Lv\IEEEauthorrefmark{1}\IEEEauthorrefmark{3}, Rui Wang\IEEEauthorrefmark{1}}\IEEEauthorrefmark{3}, Kaibin Huang\IEEEauthorrefmark{2}

    \IEEEauthorblockA{
	\IEEEauthorrefmark{1}Southern University of Science and Technology, Shenzhen, China\\
	\IEEEauthorrefmark{2}The University of Hong Kong, Hong Kong, China\\
	\IEEEauthorrefmark{3}Peng Cheng Laboratory, Shenzhen, China}
	\thanks{Part of this work has been accepted by the IEEE Global Communications Conference 2019 \cite{huang2019mecranduser}. We have extended the conference version substantially by improving the baseline policy to achieve a better performance in Section IV-A,
			proposing a novel reinforcement learning algorithm to evaluate the value function of the baseline policy without system statistics in Section V-A, devising a stochastic gradient descent algorithm to optimize the baseline policy in Section V-B,  and generating more illustrative simulation results. 
		
This work has been submitted to the IEEE journal for possible publication.
Copyright may be transferred without notice, after which this version may
no longer be accessible.	
}
}

\maketitle

\begin{abstract}
	In this paper, we investigate the scheduling design of a mobile edge computing (MEC) system, where active mobile devices with computation tasks randomly appear in a cell. Every task can be computed at either the mobile device or the MEC server. We jointly optimize the task offloading decision, uplink transmission device selection and power allocation by formulating the problem as an infinite-horizon Markov decision process (MDP). Compared with most of the existing literature, this is the first attempt to address the transmission and computation optimization with the random device arrivals in an infinite time horizon to our best knowledge. Due to the uncertainty in the device number and location, the conventional approximate MDP approaches addressing the curse of dimensionality cannot be applied. An alternative and suitable low-complexity solution framework is proposed in this work. We first introduce a baseline scheduling policy, whose value function can be derived analytically  with the statistics of random mobile device arrivals. Then, one-step policy iteration is adopted to obtain a sub-optimal scheduling policy whose performance can be bounded analytically. The complexity of deriving the sub-optimal policy is reduced dramatically compared with conventional solutions of MDP by eliminating the complicated value iteration. To address a more general scenario where the statistics of random mobile device arrivals are unknown, a novel and efficient algorithm integrating reinforcement learning and stochastic gradient descent (SGD) is proposed to improve the system performance in an online manner. Simulation results show that the gain of the sub-optimal policy over various benchmarks is significant.
\end{abstract}

\IEEEpeerreviewmaketitle

\section{Introduction}
The last decade has witnessed an unprecedented increase in mobile data traffic. In the meanwhile,  new mobile applications with intensive computation tasks and stringent latency requirements, such as face recognition, online gaming and mobile augmented reality are gaining popularity. Due to the limited battery lives and computing capabilities of mobile devices, some computation-intensive tasks need to be offloaded to more powerful edge servers, which necessitates new network architecture design. Mobile edge computing (MEC) is an emerging architecture where cloud computing capabilities are extended to the edge of the cellular networks, in close proximity to mobile users\cite{Abbas2018MECSurvey}. MEC is envisioned as a promising solution to easing the conflict between resource-hungry applications and resource-limited mobile devices \cite{beck2016whitepaper}. In this paper, we shall investigate the joint transmission and computation scheduling in an MEC system with random user arrivals via novel approaches of approximate MDP and reinforcement learning. 

\subsection{Related Works}
Resource management of MEC systems has been intensively investigated in recent years. In \cite{You2015SingleUserWPT}, the authors considered a single-user MEC system powered by wireless energy transfer. The closed-form expression of offloading decision, local CPU frequency and time division between wireless energy transfer and offloading were derived via convex optimization theory. The authors in \cite{you2016multiuser} extended the work to a multi-user scenario and formulated the multi-user resource allocation problem as a convex optimization problem where an insightful threshold-based optimal offloading strategy was derived. Moreover, game-theory-based algorithms were designed to resolve the contention in multi-user MEC offloading decision problems in \cite{chen2016gametheorymec,chen2015decentalizedgame}.

In the above works, the dynamics (arrival or departure) of mobile devices are ignored. Moreover, they assume the transmission and computation of a task can be finished within one physical-layer frame, which may not be the case in many applications. Considering the randomness of channel fading and task arrivals, the scheduling in MEC systems becomes a stochastic optimization problem. A number of research attempts have been devoted to such scheduling problems in MEC systems. In \cite{huang2012dynamic}, the authors considered a single-user MEC system and proposed a Lyapunov optimization algorithm to minimize the long-term average energy consumption. The authors in \cite{mao2016power-delay} investigated the power-delay tradeoff of a multi-user MEC system via Lyapunov optimization. Also, the authors in \cite{liu2016delayopt} solved the power-constrained delay-optimal task scheduling problem for an MEC system via MDP. Moreover, the authors in \cite{Ko2018HetnetMEC} proposed a spatial and temporal computation offloading decision algorithm in edge cloud-enabled heterogeneous networks via MDP, where multiple users and multiple computation nodes were considered. Additionally, with the popularity of artificial intelligence, a bunch of recent works on the scheduling of MEC systems have come forth leveraging the tool of deep reinforcement learning \cite{XQiu2019DRL-MEC,LTan2018DRL-Cache-MEC,JWang2019ResorceAlloc-MEC,YLiu2019DRL-VehicleEdge,YHe2018cache-comp-DRL}. Nevertheless, all these works consider the resource management with either a single mobile device or a number of fixed mobile devices. The scheduling design with random arrivals of mobile devices remains open.

In addition to MEC scheduling, MDP has been widely used in various resource allocation problems of wireless communication systems. For example, the delay-aware radio resource management for uplink, downlink and cooperative systems has been investigated in \cite{YCui2010AMDP-OFDM-uplink, YCui2010AMDP-OFDM-downlink,RWang2011DistTowHopMIMO,RWang2013RelayApproxMDP,Han2016,Han2018,Han2019}, where approximate MDP is usually adopted to address the curse of dimensionality. However, all these works considered the wireless communication scenarios with fixed transmitters and receivers. The approximate MDP approaches developed in these works as well as the deep reinforcement learning methods used in \cite{XQiu2019DRL-MEC,LTan2018DRL-Cache-MEC,JWang2019ResorceAlloc-MEC,YLiu2019DRL-VehicleEdge,YHe2018cache-comp-DRL} can not be directly applied to solve the problems considering the randomness of mobile devices in both number and locations. Moreover, these methods lack of sufficient design insights and there are no analytical performance bounds on the proposed algorithms.

\subsection{Motivations and Contributions}
 As mentioned above, existing works mainly consider the scenarios where either single or multiple mobile devices at fixed locations offload computation tasks via uplink. The arrival of new offloading mobile devices or the departure of existing ones is excluded in these scenarios. In practice, when the computation task of a mobile device is finished, the mobile device may become inactive and new devices with computation tasks may join the system in a stochastic manner. To the best of our knowledge, the resource optimization in MEC systems with random user arrivals remains largely untapped.

In this paper, we would like to shed some lights on the above issue by optimizing the task offloading in a cell with random mobile device (task) arrivals in both temporal and spatial domains. Specifically, active mobile devices with one computation task arrives randomly, and their locations follow certain spatial distribution. The tasks can be computed either locally or remotely at the edge server (via uplink). The joint optimization of task offloading decision, uplink device selection and power allocation in all the frames is formulated as an infinite-horizon MDP with discounted cost. Our main contributions on this new scheduling problem are summarized below.

\begin{itemize}
	\item \textbf{A novel low-complexity approximate MDP framework:} 
	Due to the dynamics of user arrival and departure, the number of mobile devices in the MEC system is variable. The system state space should enumerate all the possible numbers of mobile devices. The conventional approximate MDP approaches in \cite{YCui2010AMDP-OFDM-downlink,YCui2010AMDP-OFDM-uplink,RWang2011DistTowHopMIMO, RWang2013RelayApproxMDP,Han2016,Han2018,Han2019,YSun2019PushingCaching,BLv2019Cache,Lv2020TCOM,Lv2020JCIN}, which are designed for fixed users, cannot be applied to address the curse of dimensionality. Thus, a novel solution framework is proposed in this paper. Particularly, we first propose a baseline scheduling policy, whose value function can be derived analytically. Then, one-step policy iteration is applied based on the value function of the baseline policy to obtain the proposed sub-optimal policy.

	\item \textbf{An efficient reinforcement learning algorithm for system optimization without task arrival statistics:} The value function of the baseline policy depends on the task arrival statistics which may not be known in practice. Thus, we design a novel reinforcement learning method for evaluating the value function. The conventional reinforcement learning method, i.e. Q-learning, needs to learn the Q-function for all state-action pairs, which is infeasible in our problem due to the tremendous state and action spaces. In the proposed reinforcement learning method, by exploiting the derived expression of value function, we only need to track some statistical parameters. The learning efficiency is significantly improved. Moreover, we also design a stochastic gradient descent (SGD) algorithm to optimize the transmit power of the baseline policy without system statistics in an online manner, such that the performance of the proposed policy can be further improved.
	\item \textbf{Analytical performance bound:}  
In most of the existing approximate MDP methods, it is difficult to investigate the performance of the proposed algorithm analytically. In our proposed solution framework, we manage to obtain an analytical cost upper bound on the proposed algorithm.
\end{itemize}

The remainder of this paper is organized as follows. In Section II, the MEC system model is introduced. The MDP problem formulation is elaborated in Section III, and the approximate-MDP-based low-complexity scheduling framework is illustrated in Section IV. A novel reinforcement learning algorithm and a SGD-based optimization algorithm are designed in section V. Simulation results are shown in Section VI, and the conclusion is drawn in Section VII.

We use the following notation throughout this paper. $[X]^+$ denotes $\max\{X, 0\}$. $ \lceil X\rceil $ is the minimum integer greater than or equal to $ X $, and $ \lfloor X \rfloor $ is the maximum integer less than or equal to $ X $. $I(\cdot)$ is the indicator function. Bold uppercase $\mathbf A$ denotes a matrix or a system state. Bold lowercase $\mathbf a$ denotes a vector. $\mathbf A^{-1}$ and $\mathbf A^{\mathsf T}$ are the inverse and transpose of the matrix $\mathbf A$, respectively. $\mathbf I$ denotes the identity matrix with dimensionality implied by context. Calligraphic letter $\mathcal A$ denotes a set. $|\mathcal A|$ is the cardinality of $\mathcal A$, operator / denotes the set subtraction, and $\emptyset$ denotes the empty set.

\section{System Model}

\subsection{Network Model}
\begin{figure}[tb]
	\centering
	\includegraphics[scale=0.7]{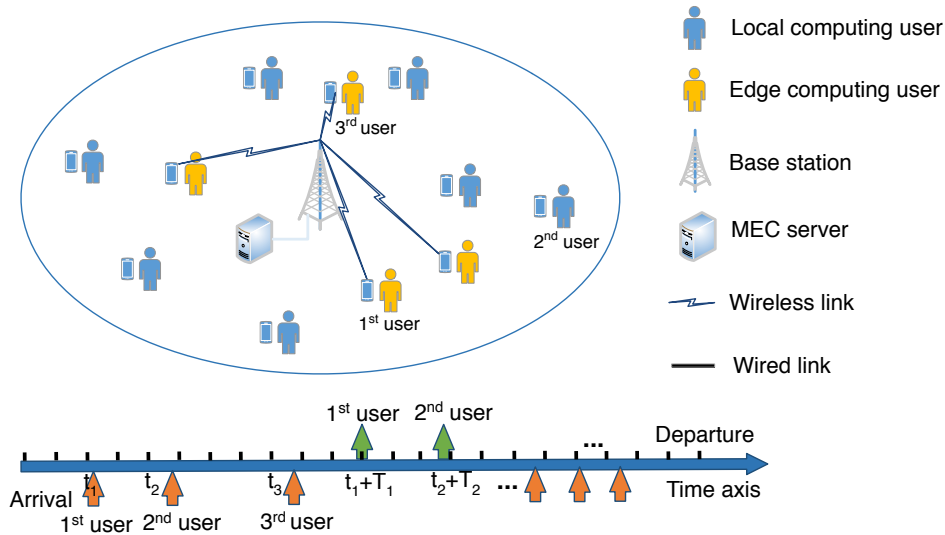}
	\caption{Illustration of MEC system model, where active devices arrive randomly in the cell coverage area and the time axis, and  the set of active devices in each frame is variable.}
	\label{fig: SystemModel}
\end{figure}

We consider a single-cell MEC system as illustrated in Fig. \ref{fig: SystemModel}, where a BS serves a region $ \mathcal C $ and an MEC server is connected to the BS. Mobile devices with computation tasks arrive randomly in the service region $ \mathcal C $. Binary computation offloading model is adopted, and every task is assumed to be indivisible from the computation perspective. Thus, each task can be either computed locally or offloaded to the MEC server via uplink transmission.

 The mobile devices with computation tasks are named as active devices. As illustrated in Fig. \ref{fig: SystemModel}, the time axis of computation and uplink transmission scheduling is measured in frame, each with duration $T_s$. Similar to \cite{KHan2018SpatialModelingMEC}, in each frame, there is at most one new active device arrived in the cell with probability $ P_N\in(0,1] $\footnote{Since we consider the scheduling in a single cell and the time scale of one frame is short (around 10ms), the average number of arrivals in one frame should be small for a reasonable burden of potential mobile edge computing. Use the Poisson arrival as an example, if the average number of arrivals in a frame is significantly smaller than 1, the probability that the number of arrivals is greater than 1 is negligible.}. The distribution density of the new active device is represented as $ \lambda ({\mathbf l}) $ for arbitrary location in the cell region $ {\mathbf l} \in \mathcal C $. Thus, 
$$\int_{\mathcal C} \lambda ({\mathbf l}) ds({\mathbf l}) = 1,$$
and
\begin{equation}\label{eq:lambda}
\Pr [\mbox{New \!active\! device\! in\! region } \mathcal C^{'}] \!\!= \!\!\!\int_{\mathcal C^{'}} \!\!\lambda ({\mathbf l}) ds({\mathbf l}),\!\! \ \forall \mathcal C^{'} \!\!\subseteq \mathcal \!C.
\end{equation}
Moreover, it is assumed that the location of each active device is quasi-static in the cell when its task is being transmitted to the MEC server. The active devices become inactive when their computation tasks are completed either locally or remotely at the MEC server, which is referred to as the departure of active devices. As in many of the existing works \cite{You2015SingleUserWPT,mao2016dynamicmec,zhang2013mccstochastic,Ji2019WPMEC}, it is assumed  that there are sufficiently many high-performance CPUs at the MEC server so that the computing latency at the MEC server can be neglected compared with the latency of local computing or uplink transmission. Moreover, due to relatively smaller sizes of computation results, the downloading latency of computation results is also neglected as in \cite{mao2016power-delay,mao2016dynamicmec,zhang2013mccstochastic,Ji2019WPMEC}.

Every new active device in the cell is assigned with a unique index. Let $ \mathcal U_L (t) $ and $ \mathcal U_E (t) $ be the sets of active devices in the $ t $-th frame whose tasks are computed locally and at the MEC server respectively, $ \mathcal{D}_{L}(t) \subseteq \mathcal{U}_L(t) $ and $\mathcal{D}_{E}(t) \subseteq \mathcal{U}_{E}(t)$ be the subsets of active devices whose computation tasks are accomplished in the $t$-th frame locally and at the MEC server respectively, $ n_t $ be the index of the new active device arriving at the beginning of $ t $-th frame.  If there is no active device arrival at the beginning of $ t $-th frame,  $ \{n_t\} = \emptyset $. On the other hand, if there is a new active device arrival at the beginning of a frame, the BS should determine if the computation task is computed at the device or the MEC server. Let $ e_t \in \{0,1\} $ represents the decision, where $ e_t = 1 $ means the task is offloaded to the MEC server and $ e_t=0 $ means otherwise. The dynamics of active devices can be represented as
\begin{equation}
\mathcal{U}_E(t+1) = \left\{ \begin{array}{cc}
 \mathcal{U}_E(t) \cup \{n_{t}\} /\mathcal{D}_E(t)&  \mbox{when } e_t = 1, \\
 \mathcal{U}_E(t)/\mathcal{D}_E(t)&  \mbox{otherwise,}
\end{array}   \right.
\end{equation}
\begin{equation}
\mathcal{U}_L(t+1) = \left\{ \begin{array}{cc}
\mathcal{U}_L(t) \cup \{n_{t}\} /\mathcal{D}_L(t)&  \mbox{when } e_t= 0, \\
\mathcal{U}_L(t)/\mathcal{D}_L(t)&  \mbox{otherwise.}
\end{array}   \right.
\end{equation}

\subsection{Task Offloading Model}

The input data for each computation task is organized by segments, each with $ b_s $ bits. Let $ d_k $ be the number of input segments for the task of the $ k $-th active device. It is assumed that the number of segments for each task (say the $k$-th active device) is a uniformly distributed random integer between $d_{\min}$ and $d_{\max}$ whose probability mass function (PMF) is given by\footnote{As a remark, notice that our proposed solution can be trivially extended to other distributions of task size.}
\begin{align} \label{eq:pd}
\Pr[d_k=a]=\left\{ \begin{array}{cc}
	\frac{1}{d_{\max}-d_{\min}+1} &\ \mbox{for}\ d_{\min}\leq a\leq d_{\max},\\
	0  &\mbox{otherwise.}
   \end{array}   \right.
\end{align}
 For the computation tasks to be offloaded to the MEC server, the input data should be delivered to the BS via uplink transmission. Hence, an uplink transmission queue is established at each active device for edge computing. Let $ Q^E_k(t) $, $ \forall k \in \mathcal U_E(t) $, be the number of segments in the uplink transmission queue of the $ k $-th active device at the beginning of the $ t $-th frame. Hence, for all $t$ with $\{n_t\} \neq \emptyset$ and $e_t=1$,
\begin{equation*}
Q_{n_t}^E(t+1) = d_{n_t}.
\end{equation*}

In the uplink, it is assumed that only one active device is selected in one uplink frame and the uplink transmission bandwidth is denoted as $W$.  Let $$ H_k(t)=\sqrt{\rho_k}h_k(t) ,\forall k\in \mathcal U_E(t) $$ be the uplink channel state information (CSI) from the $ k $-th active device to the BS, where $ h_k(t) $ and $ \rho_k  $ represent the small-scale fading and pathloss coefficients respectively. $ h_k(t) \sim \mathbb {CN} (0,1) $ is complex Gaussian distributed with zero mean and variance $ 1 $. Moreover, it is assumed that $ h_k(t) $ is independently and identically distributed (i.i.d.) for different $ t $ \footnote{The small-scale fading is varying due to the motion of transmitter, receiver or the scatters. Moreover, as described in Section 3.3.3 of \cite{goldsmith_2005}, the small-scale fading coefficients can be treated as independent as long as the frame duration is larger than the channel coherent time.} and $ k $. Let $ p_k(t) $ be the uplink transmission power of the $ k $-th active device if it is selected in the $ t $-th frame. The uplink channel capacity of the $ k $-th active device, if it is selected in the $ t $-th frame, can be represented by
\begin{equation*}
r_k(t) = W \log_2\left(1+\frac{p_k(t) \rho_k |h_k(t)|^2}{\sigma_z^2}\right),
\end{equation*}
where $\sigma_z^2$ is the power of white Gaussian noise. Furthermore, the number of segments transmitted within the $t$-th frame can be obtained by
\begin{equation}
	\phi_k(t)=\biggl\lfloor \frac{r_k(t)T_s}{b_s} \biggr\rfloor.
\end{equation}
 Hence, let $ a_t $ be the index of the selected uplink transmission device in the $ t $-th frame, we have the following queue dynamics for all $k\in \mathcal U_E(t)$,
\begin{equation}
	Q^E_k(t+1)=\left\{ \begin{array}{cc}
		\left[ Q^E_k(t)-\phi_k(t) \right]^+ & \mbox{if } k=a_t,\\
		Q^E_k(t) & \mbox{if } k\neq a_t.
	\end{array} \right.
\end{equation}

	Moreover, the $k$-th active device become inactive in the $(t+1)$-th frame ($k \in \mathcal{D}_E(t)$), if $Q^{E}_{k}(t) \neq 0$ and $Q^{E}_{k}(t+1) =0$.

\begin{Remark}[Variable Uplink Queue Number]
	In the existing works considering resource allocation with multiple queues, such as \cite{YCui2010AMDP-OFDM-downlink,YCui2010AMDP-OFDM-uplink,RWang2011DistTowHopMIMO, RWang2013RelayApproxMDP,Han2016,Han2018,Han2019,YSun2019PushingCaching,BLv2019Cache,Lv2020TCOM,Lv2020JCIN}, there are fixed active devices in the cell. In this paper, the number of active devices is variable. Hence, the queue state (number of data segments in all the queues) in the existing works can be represented by a vector with fixed dimension, but the queue state in this paper has to be represented by a vector with variable dimension. This will raise challenge in the approximate-MDP-based scheduler design, as the existing approaches adopted in \cite{YCui2010AMDP-OFDM-downlink,YCui2010AMDP-OFDM-uplink,RWang2011DistTowHopMIMO,
		 RWang2013RelayApproxMDP,Han2016,Han2018,Han2019,YSun2019PushingCaching,BLv2019Cache,Lv2020TCOM,Lv2020JCIN} cannot be applied with variable queue number. As elaborated later, we shall propose a novel approximate MDP framework to address this issue.
\end{Remark}

\subsection{Local Computing Model}
Following the computation models in \cite{you2016multiuser,mao2016power-delay}, the average number of CPU cycles for computing one bit of the input task data of the $k$-th active device is denoted as $\ell_k$, which is determined by the types of applications. Denote the local CPU frequency of the $k$-th active device as $f_k$ which is assumed to be a constant for each device and may vary over devices. We assume $\ell_k$ and $f_k$ are both random variables whose probability density functions (PDFs) are denoted by $\pi_{\ell}$ and $\pi_f$ respectively. Thus, 
\begin{equation}\label{eq:pl}
\Pr[\ell_1 \leq \ell < \ell_2]=\int_{\ell_1}^{\ell_2}\pi_{\ell}(\ell)d\ell,
\end{equation}
and 
\begin{equation}\label{eq:pf}
\Pr[f_1\leq f< f_2]=\int_{f_1}^{f_2}\pi_f(f)df.
\end{equation}
 
An input data queue is established at each local computing device. Let $ Q^L_k(t) $, $ \forall k \in \mathcal U_L(t) $, be the number of segments in the input data queue of the $ k $-th active device for local computing at the beginning of the $ t $-th frame. Hence, for all $t \mbox{ with } \{n_t\} \neq \emptyset \mbox{ and } e_t= 0$,
\begin{equation*}
Q_{n_t}^L(t+1) = d_{n_t}.
\end{equation*}
The queue dynamics at all active local computing devices can be written as
\begin{equation}
Q^L_k(t+1)= \left[Q^L_k(t)-\frac{f_k T_s}{\ell_k b_s}\right]^+, \ \forall k \in \mathcal U_L(t).
\end{equation}

Hence, the $k$-th active device is inactive in the $(t+1)$-th frame ($k \in \mathcal{D}_L(t)$), if $Q^{L}_{k}(t) \neq 0$ and $Q^{L}_{k}(t+1) =0$.
Moreover, the total computation time (measured by frames) for $k$-th active device, whose task is computed locally, is given by
\begin{equation}
T_{loc}(d_k,f_k,\ell_k)=\biggl\lceil\frac{d_k b_s \ell_k}{f_k T_s}\biggr\rceil.
\end{equation}
Following the power consumption model in  \cite{Burd1996Processor}, the local computation power of $k$-th device is
\begin{equation}
p_{loc}(f_k)=\kappa f_k^3,
\end{equation}
where $\kappa$ is the effective switched capacitance related to the CPU architecture.

\section{Problem Formulation}
In this section, we formulate the optimization of task offloading decision, uplink device selection and power allocation as an infinite-horizon MDP problem with discounted cost, where the random active device arrivals are taken into consideration.

\subsection{System State and Scheduling Policy}

The system state and scheduling policy are defined as follows.

\begin{Definition}[System State]
	At the beginning of $ t $-th frame, the state of the MEC system is uniquely specified by $ \mathbf{S}_t =(\mathbf{S}^E_t,\mathbf{S}^L_t,\mathbf{S}^N_t)$, where
	\begin{itemize}
		\item $\mathbf S^E_t$ specifies the status of task offloading, including the set of edge computing devices $ \mathcal{U}_E(t) $, their uplink small-scale fading coefficients $ \mathcal H_E(t)\triangleq \{h_k(t)|k\in\mathcal{U}_E(t)\}$, pathloss coefficients $ \mathcal{G}_E(t)\triangleq \{  \rho_k | k \in \mathcal{U}_E(t) \} $, and their uplink queue state information (QSI) $ \mathcal{Q}_E(t) \triangleq \{Q^E_k(t)| k\in \mathcal{U}_E(t)\}$.

		\item $\mathbf S^L_t$ specifies the status of local computing, including the set of local computing devices $\mathcal U_L(t)$, the application-dependent parameters $\mathcal {L}(t)\triangleq \{\ell_k| k\in \mathcal U_L(t)\}$, their CPU frequencies $\mathcal F(t)\triangleq \{f_k|k\in \mathcal U_L(t)\}$, and their QSI $\mathcal{Q}_L(t)\triangleq \{Q^L_k(t)|k\in \mathcal U_L(t)\}$.

		\item $\mathbf S^N_t$ specifies the status of the new active device, including the indicator of new arrival $I_N(t)\triangleq I(\{n_t\} \neq \emptyset)$, its index $ n_t $, pathloss coefficient $ \rho_{n_t}(t) $, size of input data $ d_{n_t} ,$ CPU frequency $f_{n_t}$ and $\ell_{n_t}$.
	\end{itemize}

\end{Definition}

\begin{Definition}[Scheduling Policy]
The scheduling policy $ \Omega(\mathbf{S}_t)  \triangleq \left(a_t, p(t), e_t\right)$ is a mapping from the system state $ \mathbf{S}_t $ to the scheduling actions, i.e, the index $a_t$ of the selected uplink transmission device in the $t$-th frame and its transmission power $p(t)$, as well as  the offloading decision $e_t$ for the new arriving active device (if any).
\end{Definition}

\begin{Remark}[Huge Space of System State]\label{Rem:HugeSpace}
	Since the arrival and departure of active devices are considered, the space of system state, denoted as $\mathcal S$, is more complicated than the existing works with fixed users. Take $\mathbf S^E_t$ as the example. The cardinality of $\mathcal U_E(t)$ is not a constant, hence $\mathcal U_E(t)$ with all possible cardinalities should be included in the system state space. Moreover, given a $\mathcal U_E(t)$, all possible small-scale fading and pathloss coefficients, $\mathcal H_E(t)$ and $\mathcal G_E(t)$, should also be included in the system state space. So does the QSI. Note that the spaces of small-scale fading and pathloss coefficients are continuous. In this paper, we shall address the low-complexity algorithm design with such huge system state space.
\end{Remark}

\subsection{Problem Formulation of MEC Scheduling}

According to Little's law, the average latency of a task is proportional to the average number of active devices in the system\cite{Kleinrock1975QueueSystem}. Hence, we define the following weighted sum of the number of active devices and their power consumptions as the system cost in the $ t $-th frame.
\begin{equation*}
g(\mathbf{S}_t, \Omega(\mathbf{S}_t))\! \triangleq\! w(|\mathcal{U}_E(t)| + |\mathcal{U}_L(t)|)+ p(t) +\!\!\!\sum_{k\in\mathcal U_L (t)} p_{loc}(f_k),
\end{equation*}
where $w$ is the weight on the latency of mobile devices. The overall minimization objective with the initial system state $ \mathbf{S} $ is then given by
\begin{eqnarray}
\overline{G} (\Omega, \mathbf{S})\!\!\!\!\!\! &\triangleq& \!\!\!\!\!\! \lim\limits_{T \rightarrow + \infty }\!\!\mathbb{E}_{\{\mathbf{S}^N_t,\mathcal{H}_E(t)|\forall t\}}\bigg[\sum_{t=1}^{T} \gamma^{t-1} g(\mathbf{S}_t, \Omega(\mathbf{S}_t))\! \bigg| \mathbf{S}_1\!=\!\mathbf{S} \!\bigg], \nonumber
\end{eqnarray}
where $ \gamma $ is the discount factor. Thus, the MEC scheduling problem is formulated as the following infinite-horizon MDP.
\begin{Problem}[MEC Scheduling Problem]\label{prob:main}
\begin{eqnarray}
\Omega^{*}=\arg\min_{\Omega} \overline{G} (\Omega, \mathbf{S}).
\end{eqnarray}
\end{Problem}

According to \cite{Bertsekas2012Dynamic}, the optimal policy of Problem \ref{prob:main} can be obtained by solving the following Bellman's equations.
\begin{align}
V(\mathbf{S}_t)  =& \min\limits_{\Omega(\mathbf{S}_t)} \bigg[g\left(\mathbf{S}_t,\Omega(\mathbf{S}_t)\right) + \sum_{\mathbf{S}_{t+1}} \gamma \Pr\!\left(\mathbf{S}_{t+1} | \mathbf{S}_{t}, \Omega(\mathbf{S}_t)\right)V(\mathbf{S}_{t+1}) \bigg], \forall \mathbf{S}_t \in \mathcal S. 
\end{align}
where $ V(\mathbf{S}) $ is the value function for system state $ \mathbf{S} $. Generally speaking, standard value iteration can be used to solve the value function, and the optimal policy denoted as $\Omega^*$ can be derived by solving the minimization problem of the right-hand-side of the above Bellman's equations. In our problem, however, the conventional value iteration is intractable due to the large state space. Conventional value iteration should evaluate the value function for all system state in  the state space $\mathcal S$. However, as mentioned in Remark \ref{Rem:HugeSpace} that the spaces of small-scale fading and pathloss coefficients are continuous. Even the continuous spaces of  small-scale fading and pathloss coefficients can be quantized, the state space grows exponentially with respect to (w.r.t.) the number of active devices.

In order to address the above issues, similar to \cite{WANG2017,Ruiwang2018_GLOBECOM}, we first reduce the system state space by exploiting (1) the independent distributions of small-scale fading and new active devices' statistics in each frame, and (2) the deterministic cost (given system state) of local computing devices. Specifically, the optimal policy can also be derived via the following equivalent Bellman's equations w.r.t compact system states.

\begin{Lemma}[Bellman's Equations with Compact State Space] \label{lem: BE_reduced}
Define the local computing cost of the $n_t$-th active device as 
$$C(n_t) \triangleq \sum_{\tau=1}^{T_{loc}(d_{n_t},f_{n_t},\ell_{n_t})}\!\gamma^{\tau} \left[ w + p_{loc}(f_{n_t})\right]$$
 and the compact system state as 
 $$ {\widetilde{\mathbf{S}}_t} \triangleq (\mathcal U_E(t), \mathcal G_E(t), \mathcal Q_E(t)).$$
  Let
$$g'(\mathbf{S}_t,\Omega(\mathbf S_t))\triangleq w|\mathcal{U}_E(t)| + p(t) + I_N(t) (1-e_t) C(n_t),$$
\begin{eqnarray}
W(\widetilde {\mathbf S})\triangleq \min_{\Omega} \lim\limits_{T \rightarrow + \infty }\!\mathbb{E}_{\{\mathcal H_E(t), \mathbf S^N_t|\forall t\}} \bigg[\sum_{t=1}^{T} \gamma^{t-1} \!g'(\mathbf{S}_t, \Omega(\mathbf{S}_t)) \bigg| \mathbf{\widetilde S}_1=\mathbf{\widetilde S} \bigg]. \nonumber
\end{eqnarray}
They satisfy the following Bellman's equations.
\begin{eqnarray}
W(\widetilde{\mathbf{S}}_t) =\min\limits_{\Omega(\mathbf{S}_t)}\mathbb{E}_{\{\mathcal H_E(t), \mathbf S^N_t|\forall t\}} \bigg\{g'\left( \mathbf{S}_t,\Omega(\mathbf{S}_t)\right)+ \sum_{\widetilde{\mathbf{S}}_{t+1}}\gamma \Pr\left(\widetilde{\mathbf{S}}_{t+1} | \mathbf{S}_{t}, \Omega(\mathbf{S}_t)\right)W (\widetilde{\mathbf{S}}_{t+1}) \!\bigg\}, \forall \widetilde{\mathbf S}_t.
\label{eqn:reduced-problem}
\end{eqnarray}
Moreover, the scheduling policy minimizing the right-hand-side of the above equation is the optimal policy of Problem \ref{prob:main}.
\end{Lemma}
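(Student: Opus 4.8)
The plan is to prove the lemma by combining a \emph{cost re-accounting} for the local-computing devices with a \emph{state-aggregation} argument justified by the i.i.d.\ structure of the per-frame disturbances, after which the reduced Bellman equation follows from the standard discounted-MDP fixed-point theory \cite{Bertsekas2012Dynamic}. The three structural facts I would exploit are: (i) once a device is assigned to local computing, its queue trajectory $Q^L_k(\cdot)$ is deterministic given $(d_k,f_k,\ell_k)$ and is decoupled from every subsequent scheduling action $(a_t,p(t),e_t)$; (ii) the small-scale fading $h_k(t)$ is i.i.d.\ across frames; and (iii) the new-arrival descriptor $\mathbf S^N_t$ is i.i.d.\ across frames. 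Fact (i) lets me replace the per-frame local cost by a single up-front charge, while facts (ii)--(iii) let me drop $\mathcal H_E(t)$ and $\mathbf S^N_t$ from the retained state and average them out.

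The central step, and the one I expect to be the main obstacle, is the local-cost re-accounting. Fix any policy and any frame $t$ with $I_N(t)=1$ and $e_t=0$, so the arriving device $n_t$ is computed locally. By the local queue dynamics its backlog starts at $d_{n_t}$ in frame $t+1$ and decreases by the constant $f_{n_t}T_s/(\ell_{n_t}b_s)$ each frame, so the device stays in $\mathcal U_L(\cdot)$ for exactly $T_{loc}(d_{n_t},f_{n_t},\ell_{n_t})$ frames, contributing $w+p_{loc}(f_{n_t})$ to $g$ in each of them. Summing the discounted contributions and re-indexing the frames yields
\begin{equation*}
\sum_{\tau=1}^{T_{loc}(d_{n_t},f_{n_t},\ell_{n_t})}\!\!\gamma^{(t+\tau)-1}\big[w+p_{loc}(f_{n_t})\big]=\gamma^{t-1}C(n_t),
\end{equation*}
which is precisely the term $\gamma^{t-1}I_N(t)(1-e_t)C(n_t)$ appearing in $g'$. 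Hence, summing over all assignment epochs, the total discounted local-computing contribution to $\overline G$ equals $\sum_{t}\gamma^{t-1}I_N(t)(1-e_t)C(n_t)$, up to a policy-independent boundary term generated by the local devices already present in the initial state; since the compact state $\widetilde{\mathbf S}=(\mathcal U_E,\mathcal G_E,\mathcal Q_E)$ carries no local component, this boundary term vanishes. The subtle points to discharge here are the discount bookkeeping (the off-by-one in the $\gamma$ exponents), the fact that charging the whole lifetime up front coincides with the frame-by-frame charge in the $T\to\infty$ limit (the discrepancy for devices whose lifetimes cross the horizon $T$ is a discounted tail that vanishes), and the interchange of the infinite sum with the expectation; the latter is legitimate because $\gamma<1$ and the active-device count grows at most linearly in $t$ (at most one arrival per frame), so the discounted costs are summable and dominated convergence applies.

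With the local cost removed, I would next verify that $(\widetilde{\mathbf S}_t)$ is itself a controlled Markov chain driven by the i.i.d.\ disturbances. The transition of $\widetilde{\mathbf S}_{t+1}=(\mathcal U_E(t+1),\mathcal G_E(t+1),\mathcal Q_E(t+1))$ is determined by the current $\widetilde{\mathbf S}_t$, the action $(a_t,p(t),e_t)$, the current fading $\mathcal H_E(t)$ (through $\phi_{a_t}(t)$ and the induced departures), and the arrival $\mathbf S^N_t$; crucially it does \emph{not} depend on $\mathbf S^L_t$. Likewise the reduced per-frame cost $g'$ is a function of $\widetilde{\mathbf S}_t$, the action, and $\mathbf S^N_t$ only. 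Therefore $(\widetilde{\mathbf S}_t)$ with stage cost $g'$ and the induced transition kernel defines a legitimate discounted MDP whose disturbances $(\mathcal H_E(t),\mathbf S^N_t)$ are i.i.d.\ and are revealed before the action is chosen.

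Finally I would invoke the standard discounted-MDP fixed-point result \cite{Bertsekas2012Dynamic} for this reduced chain: its value function $W(\widetilde{\mathbf S})$ satisfies the Bellman equation \eqref{eqn:reduced-problem}, where the expectation averages over the current-frame disturbance and the outer minimization over $\Omega(\mathbf S_t)$ --- a mapping of the full state --- realizes per-realization optimization, so that $\min_\Omega\mathbb E=\mathbb E\,\min_{\text{action}}$ and the optimal action depends on $\mathbf S_t$ only through $\widetilde{\mathbf S}_t$ and the current disturbance. Because the re-accounting identity shows that $\overline G(\Omega,\mathbf S)$ and the reduced objective differ only by a policy-independent constant, the two problems share the same minimizers; hence the policy attaining the minimum on the right-hand side of \eqref{eqn:reduced-problem} is optimal for Problem \ref{prob:main}, which completes the proof.
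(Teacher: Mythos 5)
Your proposal is correct and follows essentially the same route as the paper's own proof: folding the deterministic local-computing cost into the up-front charge $\gamma^{t-1}C(n_t)$ that defines $g'$, dropping $\mathbf S^L_t$ from the state, and then averaging over the i.i.d.\ disturbances $(\mathcal H_E(t),\mathbf S^N_t)$ to obtain the compact-state Bellman equation via standard discounted-MDP theory. You are in fact somewhat more careful than the paper on the technical side --- verifying the off-by-one discount bookkeeping, the policy-independent boundary term from initial local devices, the summability justifying the sum--expectation interchange, and the $\min_\Omega \mathbb{E} = \mathbb{E}\min$ step that puts the minimization outside the expectation --- all of which the paper's Appendix A treats implicitly.
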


\begin{proof}
	Please refer to appendix A.
\end{proof}

	In Lemma \ref{lem: BE_reduced}, the new value function $W(\widetilde{\mathbf{S}}_t)$ depends only on the compact system state $\widetilde {\mathbf S}_t$. Although the state space of the MDP problem is significantly reduced, it is still infeasible to solve equation (\ref{eqn:reduced-problem}) via the conventional value iteration. This is because the space of $\widetilde{\mathbf S}_t$ is still huge, as mentioned in Remark \ref{Rem:HugeSpace}. Moreover, the conventional approximate MDP method introduced in \cite{YCui2010AMDP-OFDM-downlink,YCui2010AMDP-OFDM-uplink,RWang2011DistTowHopMIMO, RWang2013RelayApproxMDP,YSun2019PushingCaching,BLv2019Cache}, e.g., via parametric approximation architecture, requires a fixed number of quasi-static mobile devices. It cannot be applied to our problem. In the following section, we shall propose a novel low-complexity solution framework, which approximates $W(\widetilde {\mathbf S})$ with analytical expression and obtains a sub-optimal policy by minimizing the right-hand-side of (\ref{eqn:reduced-problem}).

\section{Low-Complexity Scheduling}

In order to obtain a low-complexity scheduling policy, we first introduce a heuristic scheduling policy as the baseline policy in Section IV-A, whose value function are derived analytically. Then in Section IV-B, the proposed low-complexity sub-optimal policy can be obtained via the above value function and one-step policy iteration. 
It can be proved that
the derived value function of the baseline policy is the cost upper bound of the proposed sub-optimal policy.

\subsection{Baseline Scheduling Policy}
The following policy is adopted as the baseline scheduling policy.

\begin{Policy}[Baseline Scheduling Policy $\Pi$] Given the system state $\mathbf{S}_t$ of the $t$-th frame ($\forall t$), the baseline scheduling policy $\Pi(\mathbf{S}_t)=\left(a_t, p(t), e_t\right)$ is provided below.
\begin{itemize}
\item Uplink device selection $a_t = \min \mathcal{U}_E(t)$, $\forall t$. Thus, the BS schedules the uplink device in a first-come-first-serve manner.
\item The transmission power $p(t)$ compensates the large-scale fading (link compensation). Thus,
\begin{align}
p(t)=\frac{p_r}{\rho_{a_t}}, \forall t,
\end{align}
where $p_r$ is the average receiving power at the BS.
\item  The task of the new active device is offloaded to MEC server only when there are less than $K$ active edge computing devices in the system, i.e.,
	\begin{align}
	e_t=I(\mathcal{|U}_E(t)|<K), \forall t.
	\end{align}
\end{itemize}
\end{Policy}

\begin{figure}[tb]
	\centering
	\includegraphics[scale=0.5]{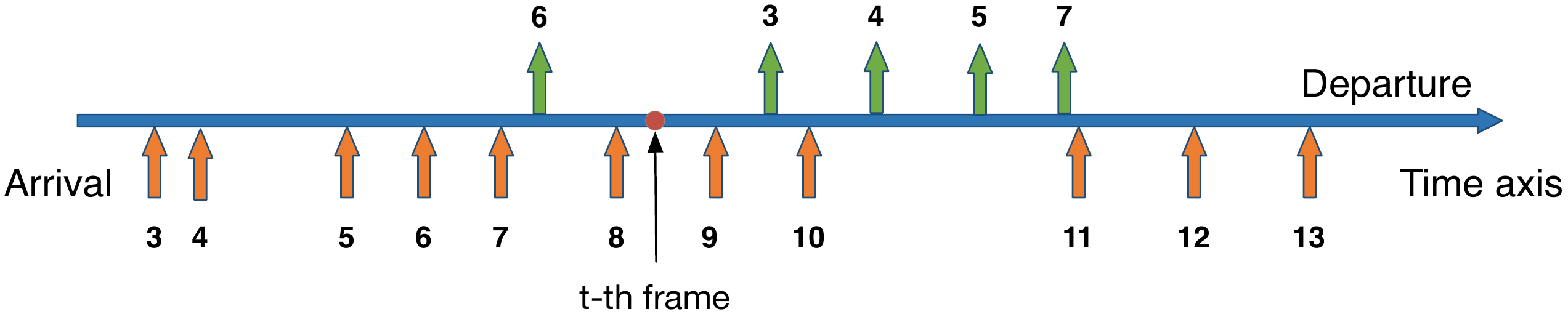}
	\caption{An example to illustrate the baseline policy.}
	\label{fig:timeline}
\end{figure}

\begin{Example}
An example illustrating the baseline policy is described below. Suppose $\mathcal U_E(t)=\{3, 4, 7\}$, $\mathcal U_L(t)=\{5, 8\}$ in the $t$-th frame, as illustrated in Fig. \ref{fig:timeline}. If the baseline policy $\Pi$ is used since the (t+1)-th frame with $K=2$, the 3-rd active device will transmit first, followed by the 4-th and 7-th active devices. Their transmission powers are $\frac{p_r}{\rho_3}$, $\frac{p_r}{\rho_4}$ and $\frac{p_r}{\rho_7}$, respectively. 
 Before the completeness of task offloading of the $3$-rd and $4$-th active devices, all new active devices will be scheduled for local computing. For example, the $9$-th and $10$-th active devices are scheduled for local computing. Then, the $11$-th and $12$-th active devices are scheduled for edge computing and the $13$-th one is scheduled for local computing.
\end{Example}
Given the initial compact system state in the first frame $\widetilde{\mathbf{S}}$, the value function of policy $\Pi$, measuring the cost of the
baseline policy since the first frame, is defined as
\begin{align}
W_{\Pi}(\widetilde{\mathbf{S}})\triangleq \lim\limits_{T\to+\infty} \mathbb{E}_{\{\mathbf{S}_t|\forall t\}}^{\Pi} \Big[ \sum_{t=1}^{T}\gamma^{t-1}{g}'(\mathbf{S}_t,\Pi(\mathbf{S}_t)) \Big|\widetilde{\mathbf{S}}_1=\widetilde{\mathbf{S}}
\Big
].
\end{align} 

In order to derive the analytical expression of $W_{\Pi}(\widetilde{\mathbf{S}})$, we denote the index of the $ k $-th active device in $ \mathcal{U}_E (1)$  as $ m_{k} $, i.e. $\mathcal U_E(1)=\{m_1, m_2, \dots, m_{|\mathcal U_E(1)|}\}$, and $ T_{k} $ as the number of frames for completing the uplink transmission of the $ m_{k} $-th device. The calculation of $W_{\Pi}(\widetilde {\mathbf S})$ in infinite time horizon can be decomposed into three periods: (1) the transmission period of the first $\left[|\mathcal{U}_E(1)|-K\right]^{+}$ active devices in the edge computing device set $\mathcal{U}_E(1)$; (2) the transmission period of the last $\min(K,|\mathcal{U}_E(1)|)$ active devices in the edge computing device set $\mathcal{U}_E(1)$; (3) the remaining frames to infinity. The costs of these three periods, denoted as $W_{\Pi}^{(1)}(\widetilde {\mathbf S})$, $W_{\Pi}^{(2)}(\widetilde {\mathbf S})$ and $W_{\Pi}^{(3)}(\widetilde {\mathbf S})$, are defined in the followings.
\begin{align}
&W_{\Pi}^{(1)}(\widetilde{\mathbf{S}})\triangleq  \mathbb{E}_{\{\mathbf{S}_t|\forall t\},\{T_{k}|\forall k\}}^{\Pi} \left[ 
\sum_{t=1}^{\sum_{i=1}^{\left[|\mathcal{U}_E(1)|-K\right]^{+}}T_{i}}\gamma^{t-1}{g}'(\mathbf{S}_t,\Pi(\mathbf{S}_t)) \Bigg|\widetilde{\mathbf{S}}_1=\widetilde{\mathbf{S}}
\right],\label{eqn:W1}\\
&W_{\Pi}^{(2)}(\widetilde{\mathbf{S}})\triangleq  \mathbb{E}_{\{\mathbf{S}_t|\forall t\},\{T_{k}|\forall k\}}^{\Pi} \left[ 
\sum_{t=\sum_{i=1}^{\left[|\mathcal{U}_E(1)|-K\right]^{+}}T_{i}+1}^{\sum_{i=1}^{|\mathcal{U}_E(1)|}T_{i}}\gamma^{t-1}{g}'(\mathbf{S}_t,\Pi(\mathbf{S}_t)) \Bigg|\widetilde{\mathbf{S}}_1=\widetilde{\mathbf{S}}
\right],\label{eqn:W2}\\
&W_{\Pi}^{(3)}(\widetilde{\mathbf{S}})\triangleq \lim\limits_{T\to +\infty} \mathbb{E}_{\{\mathbf{S}_t|\forall t\},\{T_{k}|\forall k\}}^{\Pi} \left[ 
\sum_{t=\sum_{i=1}^{|\mathcal{U}_E(1)|}T_{i}+1}^{T}
\gamma^{t-1}{g}'(\mathbf{S}_t,\Pi(\mathbf{S}_t)) \Bigg|\widetilde{\mathbf{S}}_1=\widetilde{\mathbf{S}}
\right].\label{eqn:W3}
\end{align}

Hence,
\begin{align}
W_{\Pi}(\widetilde {\mathbf S})=W_{\Pi}^{(1)}(\widetilde {\mathbf S})+W_{\Pi}^{(2)}(\widetilde {\mathbf S})+W_{\Pi}^{(3)}(\widetilde {\mathbf S}).
\end{align}

The per-frame system cost in the three periods can be treated as stochastic processes, and the discounted summation of per-frame cost can be calculated via the probability  transition matrices. Specifically, 
the expressions $W_{\Pi}^{(1)}(\widetilde {\mathbf S})$, $W_{\Pi}^{(2)}(\widetilde {\mathbf S})$ and $W_{\Pi}^{(3)}(\widetilde {\mathbf S})$ are given by following
three lemmas respectively.

\begin{Lemma}[Analytical Expression of $W_{\Pi}^{(1)}(\mathbf {\widetilde S})$]\label{lem:W1}
	$W_{\Pi}^{(1)}(\widetilde{\mathbf{S}})$ can be written by 
	\begin{align}\label{eq:W_Pi_1}
	W_{\Pi}^{(1)}(\widetilde{\mathbf{S}}) = & \mathbb{E}_{\{T_{k}|\forall k\}}\left [ \sum_{ k =1}^{\left[|\mathcal{U}_E(1)|-K\right]^{+}}\gamma^{\sum\limits_{i=1}^{k-1}T_{i}} \left(  \frac{1-\gamma^{T_{k}}}{1-\gamma} \frac{p_r}{\rho_{m_{k}}} + w\Big[|\mathcal{U}_E(1)|-k+1\Big]^{+}\frac{1-\gamma^{\sum\limits_{i=1}^{k}T_{i}}}{1-\gamma}\right)\right ]\nonumber\\
	&+P_N \mathbb{E}_{\{T_k|\forall k\}}\left[\sum_{t=1}^{\sum_{k=1}^{\left[|\mathcal{U}_E(1)|-K\right]^{+}}T_{k}} \gamma^{t-1}\mathbb{E}[C(n_t)] \right],
	\end{align}
where 
\begin{align}
\mathbb{E} [C(n_t)]=\sum_{d_{\min}}^{d_{\max}}\frac{\int\int \pi_f(f)\pi_{\ell}(\ell)C(n_t)dfd\ell}{d_{\max}-d_{\min}+1}. 
\end{align}
Moreover, for sufficiently large input data size, we have
\begin{align}\label{eq:Tk}
T_{k} = \left\lceil\frac{ Q_{m_{k}} b_s }{\mathbb{E}_{
	h} W\log_2 \left( 1 + \frac{p_r |h|^2}{\sigma_z^2} \right)T_s} \right\rceil, \forall k,
\end{align}
where $ \mathbb{E}_{h}  $ is the expectation w.r.t. small-scale fading.
\end{Lemma}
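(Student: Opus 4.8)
The plan is to exploit the defining structure of the baseline policy $\Pi$ during period~(1), over which the number of initially present edge devices is cleared from $|\mathcal{U}_E(1)|$ down to $K$. Write $N_1\triangleq[|\mathcal{U}_E(1)|-K]^+$ for the number of such devices and let $s_k\triangleq\sum_{i=1}^{k}T_i$ be the frame at which the $k$-th of them finishes. The first fact I would establish is that throughout period~(1) the edge count never reaches $K$: while device $m_k$ ($k\le N_1$) is served, devices $1,\dots,k-1$ have departed and devices $k,\dots,|\mathcal{U}_E(1)|$ remain, so $|\mathcal{U}_E(t)|=|\mathcal{U}_E(1)|-k+1\ge K+1>K$. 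Hence $e_t=I(|\mathcal{U}_E(t)|<K)=0$ in every frame of period~(1), i.e. every new arrival is deterministically routed to local computing. This decouples the new-arrival cost from the edge-queue evolution and lets me split the per-frame cost $g'(\mathbf{S}_t,\Pi(\mathbf{S}_t))=w|\mathcal{U}_E(t)|+p(t)+I_N(t)(1-e_t)C(n_t)$ into three additive pieces, sum each over period~(1), and recombine under the expectation over the random durations $\{T_k\}$.

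For the power and latency pieces I would sum over the service windows. Under $\Pi$ device $m_k$ is served in frames $s_{k-1}+1,\dots,s_k$ at the fixed power $p(t)=p_r/\rho_{m_k}$, so a geometric sum over the window gives the discounted power cost $\gamma^{s_{k-1}}\frac{1-\gamma^{T_k}}{1-\gamma}\frac{p_r}{\rho_{m_k}}$. Because the edge count is piecewise constant and equals $[|\mathcal{U}_E(1)|-k+1]^+$ on window $k$, multiplying by $w$ and summing the discounted frames of the window (again a geometric series prefixed by $\gamma^{s_{k-1}}$) yields the latency contribution of window $k$; adding over $k=1,\dots,N_1$ assembles the first bracket of \eqref{eq:W_Pi_1}. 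For the local-computing piece I would use that the arrival indicator is Bernoulli$(P_N)$ and that the arriving device's parameters $(d_{n_t},f_{n_t},\ell_{n_t})$ are drawn independently of the edge dynamics; by linearity of expectation the expected per-frame local cost in period~(1) equals $P_N\,\mathbb{E}[C(n_t)]$ (the same for every frame), and summing the discounted values over the $s_{N_1}$ frames of the period gives the second line of \eqref{eq:W_Pi_1}, with $\mathbb{E}[C(n_t)]$ evaluated by averaging $C(n_t)$ over the uniform law of $d_{n_t}$ in \eqref{eq:pd} and the densities $\pi_f,\pi_\ell$ of \eqref{eq:pf} and \eqref{eq:pl}.

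Finally, to justify the deterministic surrogate \eqref{eq:Tk} for $T_k$ I would observe that the power-compensation rule makes the received SNR $p_r|h_{m_k}(t)|^2/\sigma_z^2$ independent of the device and i.i.d. across frames, so the per-frame cleared segments $\phi_{m_k}(t)=\lfloor r_{m_k}(t)T_s/b_s\rfloor$ form an i.i.d. sequence and $T_k$ is the first-passage time at which their partial sums reach $Q_{m_k}$. For a large initial queue the law of large numbers concentrates the empirical mean of $\phi_{m_k}(t)$ around $\mathbb{E}_h[W\log_2(1+p_r|h|^2/\sigma_z^2)]T_s/b_s$, and the $O(1)$ per-frame error from the floor is negligible over the many frames required, which yields $T_k=\lceil Q_{m_k}b_s/(\mathbb{E}_h[W\log_2(1+p_r|h|^2/\sigma_z^2)]T_s)\rceil$. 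I expect the main obstacle to be the careful handling of the nested expectations: separating the expectation over the arrival process (which generates the local-computing term) from the expectation over the channel-driven durations $\{T_k\}$ relies on their mutual independence together with the tower property, and the replacement of the random $T_k$ by its deterministic value needs the large-queue concentration argument to be made precise.
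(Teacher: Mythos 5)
Your proposal takes essentially the same route as the paper's own proof, which is only a two-sentence sketch: the paper likewise identifies the first term of \eqref{eq:W_Pi_1} as the discounted power-plus-latency cost of the devices $m_1,\dots,m_{[|\mathcal{U}_E(1)|-K]^+}$, identifies the second term as the cost of new arrivals during this period (all routed to local computing because the edge count stays above $K$), and justifies \eqref{eq:Tk} by the same ergodicity argument (large tasks span many frames, so the ergodic capacity is achieved). You supply details the paper omits entirely -- the argument that $|\mathcal{U}_E(t)|=|\mathcal{U}_E(1)|-k+1\geq K+1$ during window $k$ so that $e_t=0$, the window-by-window geometric sums, the independence/tower argument separating the arrival process from the channel-driven durations, and the first-passage/concentration reading of \eqref{eq:Tk} -- so your write-up is strictly more rigorous than the published sketch.

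Two points deserve attention. First, your geometric sum for the latency piece of window $k$ yields $w\big[|\mathcal{U}_E(1)|-k+1\big]^{+}\gamma^{\sum_{i=1}^{k-1}T_i}\frac{1-\gamma^{T_k}}{1-\gamma}$, whereas the lemma as printed has $\frac{1-\gamma^{\sum_{i=1}^{k}T_i}}{1-\gamma}$ inside the bracket; the two coincide only for $k=1$ (where $\sum_{i=1}^{k}T_i=T_k$) and differ for $k\geq 2$. Since the edge count is indeed constant and equal to $|\mathcal{U}_E(1)|-k+1$ on frames $\sum_{i=1}^{k-1}T_i+1,\dots,\sum_{i=1}^{k}T_i$, your computation gives the correct discounted sum, and the printed exponent appears to be a typo in \eqref{eq:W_Pi_1} (the paper's proof contains no algebra supporting it); you should flag the discrepancy explicitly rather than claim to have derived the displayed formula verbatim. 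Second, your reason for discarding the floor in $\phi_{m_k}(t)$ is slightly off: the per-frame loss of up to one segment does not average away over many frames -- it accumulates to an $O(T_k)$ segment deficit -- and is negligible only when each frame carries many segments, i.e.\ $r_{m_k}(t)T_s\gg b_s$. This is, however, a hand-wave of the same order as the paper's own, since \eqref{eq:Tk} is in both treatments a deterministic surrogate (justified by concentration of the first-passage time, with the expectation $\mathbb{E}_{\{T_k\}}$ then acting trivially) rather than an exact identity.
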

\begin{proof}
Please refer to Appendix B.
\end{proof}

\begin{Lemma}[Analytical Expression of $W_{\Pi}^{(2)}(\mathbf {\widetilde S})$]\label{lem:W2}
	Define the following notations:
	\begin{itemize}
		\item $\mathbf u \in \mathbb R^{(K+1)\times 1}$, whose $(\min(|\mathcal{U}_E(1)|,K)+1)$-th entry is $1$ and
		other entries are all $0$. 
		\item $ \mathbf{g} = [g_1 \ g_2 \ ... \ g_{K+1}]^{\mathsf T} \in \mathbb R^{(K+1)\times 1}$, where $g_1=0$, $g_i=w(i-1)$,  $\forall i=2,3,...,K$, and 
		$g_{K+1}=wK+P_N\mathbb{E}[C(n_t)]$.
		\item $\mathbf{P}\in  \mathbb R^{(K+1)\times(K+1) } $, where $[\mathbf{P}]_{i,i-1}=1$, $\forall i=2,3,...,K+1$, $[\mathbf{P}]_{i,i}=1$ and
		other entries are all $0$.
		\item  $\mathbf{M}\in  \mathbb R^{(K+1)\times(K+1) } $, where $[\mathbf{M}]_{j,j+1}=P_N$, $[\mathbf{M}]_{j,j}=1-P_N$, $\forall j=1,2,...,K$, $[\mathbf{M}]_{K+1,K+1}=1$,  and
		other entries are all $0$.
	\end{itemize}
	Then, the analytical expression of 	$W_{\Pi}^{(2)}(\widetilde{\mathbf{S}})$ is given by
	\begin{align}\label{eqn:W_2}
	W_{\Pi}^{(2)}(\widetilde{\mathbf{S}}) =&  \mathbb{E}_{\{T_{k}|\forall k\}}\left [ 
	\sum_{k=\left[|\mathcal{U}_E(1)|-K\right]^{+}+1}^{|\mathcal{U}_E(1)|}
	\gamma^{\sum\limits_{i=1}^{k-1}T_{i}} \left(  \frac{1-\gamma^{T_{k}}}{1-\gamma} \frac{p_r}{\rho_{m_{k}}}\right)\right ]\nonumber\\
	&+	\sum_{k=\left[|\mathcal{U}_E(1)|-K\right]^{+}+1}^{|\mathcal{U}_E(1)|} \ \sum_{t=\sum_{i=1}^{k-1}T_{i}+1}^{\sum_{i=1}^{k}T_{i}} \gamma^{t-1}\mathbf{u}_{k}^{\mathsf T}(\mathbf{M})^{\beta_{k,t}}\mathbf{g},
	\end{align}
	where $\beta_{k,t} \triangleq t-\sum_{i=1}^{k-1}T_{i}-1$, and $\mathbf{u}_{\left[|\mathcal{U}_E(1)|-K\right]^{+}+1}=\mathbf{u},$
	\begin{align*}
	\mathbf{u}_k=\left[ \mathbf{u}_{k-1}^{\mathsf T}(\mathbf{M})^{T_{k-1}}\mathbf{P}\right]^{\mathsf T}, \ k = \left[|\mathcal{U}_E(1)|-K\right]^{+}+2, \dots, |\mathcal{U}_E(1)|.
	\end{align*}

\end{Lemma}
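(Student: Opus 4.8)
The plan is to decompose the per-frame cost $g'(\mathbf{S}_t,\Pi(\mathbf{S}_t)) = w|\mathcal{U}_E(t)| + p(t) + I_N(t)(1-e_t)C(n_t)$ over period (2) into two separately computable contributions: the uplink transmission power $p(t)$, and the aggregate of the latency cost $w|\mathcal{U}_E(t)|$ together with the local-computing cost $I_N(t)(1-e_t)C(n_t)$ incurred by new arrivals. First I would observe that throughout period (2) the edge-device count $|\mathcal{U}_E(t)|$ never exceeds $K$: it starts at the deterministic value $\min(K,|\mathcal{U}_E(1)|)$ and, under policy $\Pi$, a new arrival is offloaded to the MEC server (raising the count by one) whenever $|\mathcal{U}_E(t)|<K$, and is routed to local computing (leaving the edge count unchanged) precisely when $|\mathcal{U}_E(t)|=K$. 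This motivates modelling the edge-device count by a finite Markov chain on the $K+1$ states $\{0,1,\dots,K\}$, indexed so that state $i$ corresponds to $i-1$ active edge devices.

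For the power contribution I would use the first-come-first-serve rule $a_t=\min\mathcal{U}_E(t)$: during the $T_k$ frames indexed $\sum_{i=1}^{k-1}T_i+1$ through $\sum_{i=1}^{k}T_i$ the selected device is $m_k$, transmitting at power $p_r/\rho_{m_k}$. Summing $\gamma^{t-1}$ over these $T_k$ frames yields the discounted factor $\gamma^{\sum_{i=1}^{k-1}T_i}\,\frac{1-\gamma^{T_k}}{1-\gamma}$, so that summing over the last $\min(K,|\mathcal{U}_E(1)|)$ devices and taking the expectation over $\{T_k\}$ reproduces the first line of (\ref{eqn:W_2}).

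For the latency-plus-local contribution I would establish the chain dynamics explicitly. Within the transmission of a single device $m_k$ no original device departs (departures occur only at transmission completions), while a new arrival occurs with probability $P_N$ and raises the count by one until saturation; this is exactly the transition matrix $\mathbf{M}$, whose state-$(K+1)$ self-loop encodes the cap at $K$. At each boundary between $m_{k-1}$ and $m_k$ the completing device leaves, decrementing the count by one, which is the action of the shift-down matrix $\mathbf{P}$ and gives the recursion $\mathbf{u}_k=[\mathbf{u}_{k-1}^{\mathsf{T}}\mathbf{M}^{T_{k-1}}\mathbf{P}]^{\mathsf{T}}$, initialised at the deterministic count $\min(K,|\mathcal{U}_E(1)|)$ encoded by $\mathbf{u}$. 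Consequently, at the frame $t$ lying $\beta_{k,t}=t-\sum_{i=1}^{k-1}T_i-1$ steps into the transmission of $m_k$, the state distribution is $\mathbf{u}_k^{\mathsf{T}}\mathbf{M}^{\beta_{k,t}}$. Taking the inner product with the cost vector $\mathbf{g}$, whose entries $w(i-1)$ account for the latency of $i-1$ edge devices and whose last entry carries the extra $P_N\mathbb{E}[C(n_t)]$ representing the expected local-computing cost of an arrival at the saturated count $K$, gives the expected per-frame cost; discounting by $\gamma^{t-1}$ and summing over all period-(2) frames reproduces the second line of (\ref{eqn:W_2}).

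The main obstacle I anticipate is the rigorous justification of the Markov model, specifically three points: that the original edge devices depart only at the transmission boundaries, so that the intra-transmission dynamics are governed purely by arrivals through $\mathbf{M}$ while departures are cleanly isolated into $\mathbf{P}$; that the arrivals joining the edge queue during period (2) do not themselves complete or depart within period (2), since they are queued behind the original devices and their own transmission spills into period (3), so they can only push the count upward; and that the saturation behaviour at count $K$ is consistently captured by both the absorbing self-loop of $\mathbf{M}$ and the augmented entry $g_{K+1}$. Once these are settled, the identity follows by linearity of expectation, with the transmission durations $\{T_k\}$ treated as determined by the queue lengths through the large-data-size approximation (\ref{eq:Tk}).
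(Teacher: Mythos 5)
Your proposal is correct and follows essentially the same route as the paper's (very terse) proof: the first term of \eqref{eqn:W_2} is the discounted transmission power of the remaining $\min(K,|\mathcal{U}_E(1)|)$ devices under the first-come-first-serve rule, and the second term aggregates the delay cost and the local-computing cost of saturated-state arrivals via the $(K+1)$-state chain with intra-transmission arrivals $\mathbf{M}$, boundary departures $\mathbf{P}$, initial distribution $\mathbf{u}$, and cost vector $\mathbf{g}$. In fact your write-up makes explicit several points the paper leaves implicit, notably that the period-(2) starting count is deterministically $\min(K,|\mathcal{U}_E(1)|)$ because all period-(1) arrivals are routed to local computing, and that edge arrivals during period (2) cannot depart within it under FCFS ordering.
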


\begin{proof}
Please refer to Appendix B.
\end{proof}

\begin{Lemma}[Analytical Expression of $W_{\Pi}^{(3)}(\mathbf {\widetilde S})$]\label{lem:W3}
	Define the following notations:
	\begin{itemize}
		\item $\zeta\in\{0,1,\dots,K\}$ denotes the number of edge computing devices.
		\item $\xi\in \{0,1,\dots,d_{\max}\}$ denotes the number of segments of the first edge computing device.
		\item $\epsilon_{\zeta,\xi}$ denotes an index and
		\begin{equation}
		\epsilon_{\zeta,\xi} \triangleq \left\{ \begin{array}{cc}
		1&  \zeta=0,\\
		(\zeta-1)d_{\max}+\xi+1	&  \mbox{otherwise.}
		\end{array}   \right.
		\end{equation}
		\item $\mathbf{v}\in \mathbb R^{(Kd_{\max}+1)\times 1}$. When $|\mathcal{U}_E(1)|=0$, $\mathbf{v}=[1 \ 0 \dots \ 0 \ 0]^{\mathsf T}$; otherwise,
		the entries of $\mathbf{v}$ is given by 
			\begin{equation} \label{eqn:v}
		[\mathbf{v}]_{	\epsilon_{\zeta,\xi} }\triangleq \left\{ \begin{array}{cc}
		\left[\mathbf{u}_{|\mathcal{U}_E(1)|}^{\mathsf T}(\mathbf{M})^{T_{|\mathcal{U}_E(1)|}}\mathbf{P}\right]_{1}		&  \epsilon_{\zeta,\xi}=1, \\
		\frac{1}{d_{\max}-d_{\min}+1}\left[\mathbf{u}_{|\mathcal{U}_E(1)|}^{\mathsf T}(\mathbf{M})^{T_{|\mathcal{U}_E(1)|}}\mathbf{P}\right]_{i}		&  \zeta=i-1, \ d_{\min}\leq\xi\leq d_{\max
		},
		\\
		0	&  \mbox{otherwise.}
		\end{array}   \right ..
		\end{equation}

		\item $\mathbf{c}\in \mathbb R^{(Kd_{\max}+1)\times 1}$, and 
		\begin{equation}
		[\mathbf{c}]_{	\epsilon_{\zeta,\xi} }\triangleq \left\{ \begin{array}{cc}
		0		&  \epsilon_{\zeta,\xi}=1, \\
		w\zeta+\mathbb{E}_{\rho_{n_t}}[\frac{p_r}{\rho_{n_t}}]	&  0<\zeta<K,
		\\
		w\zeta+\mathbb{E}_{\rho_{n_t}}[\frac{p_r}{\rho_{n_t}}]+P_N\mathbb{E}[C(n_t)]		&  \zeta=K.
		\end{array}   \right..
		\end{equation}
	\end{itemize}
	Then, the analytical expression of 	$W_{\Pi}^{(3)}(\widetilde{\mathbf{S}})$ is given by
		\begin{align}\label{eq:W-Pi-3}
	W_{\Pi}^{(3)}(\widetilde{\mathbf{S}})=\lim\limits_{T\to +\infty}\sum_{t=\sum_{i=1}^{|\mathcal{U}_E(1)|}T_{i}+1}^{T} \gamma^{t-1} \mathbf{v}^{\mathsf T}({\bf \Phi })^{t-\sum_{i=1}^{|\mathcal{U}_E(1)|}T_{i}-1}
	\mathbf{c}=\gamma^{\sum_{i=1}^{|\mathcal{U}_E(1)|}T_{i}}\mathbf{v}^{\mathsf T}\left( \mathbf{I}-\gamma{\bf \Phi} \right)^{-1}\mathbf{c}
	,
	\end{align}
 where
	the non-zero entries of the transition probability matrix ${\bf \Phi}\in \mathbb{R}^{(Kd_{\max}+1)\times (Kd_{\max}+1)}$ are given in table \ref{table:Phi}, and other entries are all $0$.
\end{Lemma}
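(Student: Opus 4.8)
The plan is to recognize period 3 as a time-homogeneous Markov reward process and then collapse its discounted cost into closed form via the Neumann series. After the last initial edge device $m_{|\mathcal U_E(1)|}$ departs at frame $\sum_{i=1}^{|\mathcal U_E(1)|}T_i$, the evolution of the system under $\Pi$ no longer depends on the specific initial devices; it is driven only by the i.i.d.\ arrival process (arrival probability $P_N$, uniform segment size on $[d_{\min},d_{\max}]$, i.i.d.\ pathloss) and the first-come-first-serve / link-compensation rules. First I would argue that, under these rules together with the deterministic per-frame transmission amount implied by \eqref{eq:Tk}, the only information about the edge queue needed to determine both the next state and the per-frame cost is the pair $(\zeta,\xi)$: the number of edge devices and the remaining segments of the head-of-line device. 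Every device behind the head keeps its untransmitted segment count, which is a fresh uniform draw the moment it becomes the head, so no further memory is required. The index map $\epsilon_{\zeta,\xi}$ merely flattens this two-dimensional state into a single coordinate, yielding a finite state space of size $Kd_{\max}+1$.

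With the state identified, I would assemble the three ingredients of the reward process. The cost vector $\mathbf c$ is read off directly from $g'$: a state with $\zeta$ edge devices incurs the latency term $w\zeta$, the expected compensated transmit power $\mathbb E_{\rho_{n_t}}[p_r/\rho_{n_t}]$ whenever the head is active, and---only when $\zeta=K$, so that an arriving device cannot be admitted to the edge and is forced to compute locally---the extra term $P_N\mathbb E[C(n_t)]$; this reproduces the three cases of $\mathbf c$. The transition matrix ${\bf \Phi}$ (entries in Table~\ref{table:Phi}) encodes two independent effects per frame: a new arrival with probability $P_N$ that increments $\zeta$ (and, if $\zeta$ was $0$, seeds a uniform $\xi$) whenever $\zeta<K$, and the deterministic decrement of $\xi$ by the head device's per-frame throughput, with a departure ($\zeta\!\to\!\zeta-1$ and $\xi$ re-drawn uniformly) once the head empties. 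The initial distribution $\mathbf v$ is the law of $(\zeta,\xi)$ at the first frame of period 3; I would obtain it by pushing the end-of-period-2 occupancy vector through the last transmission via $\mathbf u_{|\mathcal U_E(1)|}^{\mathsf T}(\mathbf M)^{T_{|\mathcal U_E(1)|}}\mathbf P$ and then spreading the mass corresponding to a nonempty queue uniformly over $\xi\in[d_{\min},d_{\max}]$, exactly as in \eqref{eqn:v}.

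Given these objects, the remaining computation is mechanical. Because $\mathbf v^{\mathsf T}{\bf \Phi}^{\,k}$ is the state distribution $k$ frames into period 3, the expected per-frame cost at frame $t$ is $\mathbf v^{\mathsf T}{\bf \Phi}^{\,t-s-1}\mathbf c$ with $s\triangleq\sum_{i=1}^{|\mathcal U_E(1)|}T_i$; substituting $k=t-s-1$ and factoring out $\gamma^{s}$ gives $\gamma^{s}\,\mathbf v^{\mathsf T}\big(\sum_{k=0}^{\infty}(\gamma{\bf \Phi})^{k}\big)\mathbf c$. Since ${\bf \Phi}$ is row-stochastic its spectral radius equals $1$, so $\gamma{\bf \Phi}$ has spectral radius $\gamma<1$; the Neumann series therefore converges and equals $(\mathbf I-\gamma{\bf \Phi})^{-1}$, which yields the claimed closed form $\gamma^{s}\mathbf v^{\mathsf T}(\mathbf I-\gamma{\bf \Phi})^{-1}\mathbf c$.

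The main obstacle is not the summation but verifying that $(\zeta,\xi)$ is genuinely a sufficient (Markov) statistic and that the entries of ${\bf \Phi}$ and $\mathbf v$ are the correct transition and boundary probabilities. In particular I would check the edge cases carefully: the $\zeta=0$ rows (where an arrival must seed a fresh head), the departure step (where $\zeta$ and $\xi$ change simultaneously), the $\zeta=K$ blocking behaviour, and the consistency between the period-2 hand-off vector $\mathbf u_{|\mathcal U_E(1)|}^{\mathsf T}(\mathbf M)^{T_{|\mathcal U_E(1)|}}\mathbf P$ and the definition of $\mathbf v$. Once this bookkeeping is in place, the geometric-series argument closes the proof immediately.
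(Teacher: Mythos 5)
Your overall architecture coincides with the paper's own proof: period 3 is treated as a finite time-homogeneous Markov reward process on the flattened state $(\zeta,\xi)$; $\mathbf{v}$ is the hand-off distribution $\mathbf{u}_{|\mathcal{U}_E(1)|}^{\mathsf T}(\mathbf{M})^{T_{|\mathcal{U}_E(1)|}}\mathbf{P}$ with the nonempty-queue mass spread uniformly over $\xi\in\{d_{\min},\dots,d_{\max}\}$; $\mathbf{c}$ is read off $g'$ exactly as you state (including the blocking term $P_N\mathbb{E}[C(n_t)]$ only at $\zeta=K$); and your closing step is the paper's closing step, since the paper establishes $\sum_{t\geq 1}(\gamma{\bf \Phi})^{t-1}=(\mathbf{I}-\gamma{\bf \Phi})^{-1}$ via partial sums and the Perron--Frobenius fact $\varrho({\bf \Phi})=1$, which is precisely your Neumann-series argument.

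There is, however, one genuine error in your description of the period-3 dynamics, and it sits exactly where the paper does most of its work. You invoke a ``deterministic per-frame transmission amount implied by \eqref{eq:Tk}'' and later call the evolution of $\xi$ a deterministic decrement. Equation \eqref{eq:Tk} is an ergodic-capacity approximation valid only for the large initial tasks, and the paper uses it solely to fix the total transmission durations $T_k$ of the devices in $\mathcal{U}_E(1)$ during periods 1 and 2. In period 3 the per-frame decrement of the head-of-line queue is random: under link compensation the received SNR is $p_r|h|^2/\sigma_z^2$ with $|h|^2$ exponentially distributed, so the head clears $\xi-\xi'$ segments in one frame with probability $\exp\{-\alpha(\xi-\xi')/p_r\}-\exp\{-\alpha(\xi-\xi'+1)/p_r\}$ and empties with probability $\exp\{-\alpha(\xi)/p_r\}$; these Rayleigh tail probabilities, combined with the Bernoulli($P_N$) arrivals and the uniform re-draw of the new head's size, are exactly the twelve cases enumerated in Table \ref{table:Phi}, and deriving them constitutes the bulk of the paper's proof, which your plan defers to ``checking.'' A deterministic-decrement chain would concentrate each row of ${\bf \Phi}$ on a single successor $\xi'$ and could not reproduce the table, so the verification carried out under your stated premise would fail. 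Note also that the Markov sufficiency of $(\zeta,\xi)$ needs no determinism at all: it follows from i.i.d.\ fading across frames, link compensation (which makes the throughput law independent of the head's identity and pathloss, so pathloss enters only through $\mathbb{E}_{\rho_{n_t}}[p_r/\rho_{n_t}]$ in $\mathbf{c}$), and your correct observation that devices behind the head retain fresh uniform size draws.
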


\begin{proof}
	Please refer to Appendix B.
\end{proof}

Compared with optimal MDP solution\cite{Bertsekas2012Dynamic} and conventional approximate MDP\cite{YCui2010AMDP-OFDM-downlink,YCui2010AMDP-OFDM-uplink,RWang2011DistTowHopMIMO, RWang2013RelayApproxMDP,YSun2019PushingCaching,BLv2019Cache}, our proposed method can significantly reduce the complexity in the phase of value iteration. Particularly, the complexity of value function calculation for an arbitrary system state is  $\mathcal{O}(1)$, since we
can obtain the analytical expression of the approximate value function.

\begin{table*}
	\centering  % 显示位置为中间
	\caption{Non-zeros Entries of Matrix ${\bf \Phi}$ ( $\alpha(x)=[2^{\frac{xb_s}{WT_s}}-1]\sigma_z^2$)}  % 表格标题 
	\label{table:Phi}  % 用于索引表格的标签
	%字母的个数对应列数，|代表分割线
	% l代表左对齐，c代表居中，r代表右对齐
	\begin{tabular}{|c|c|c|c|c|}  
		\hline  % 表格的横线
		& & & & \\[-6pt]  %可以避免文字偏上来调整文字与上边界的距离
		$\zeta$&$\xi$&${\zeta}'$ & ${\xi}'$ & $[{\bf \Phi}]_{\epsilon_{\zeta,\xi},\ \epsilon_{{\zeta}',{\xi}'}}$ \\  % 表格中的内容，用&分开，\\表示下一行
		\hline
		0 & 0& 0 &0 & $1-P_N$
		\\
		\hline
		0 & 0 & 1& $d_{\min},\dots,d_{\max}$ & $\frac{P_N}{d_{\max}-d_{\min}+1}$ \\
		\hline
		1 & $1,\dots,d_{\max}$ & 0& 0& $(1-P_N)\exp\{-\frac{\alpha(\xi)}{p_r}\}$\\
		\hline
		$2,\dots,K-1$ & $1,\dots,d_{\max}$  & $\zeta-1$ & $d_{\min}, \dots, d_{\max}$ & $\frac{1-P_N}{d_{\max}-d_{\min}+1}\exp\{-\frac{\alpha(\xi)}{p_r}\}$
		\\
		\hline
		$1,\dots,K-1$ & $1,\dots,d_{\max}$  & $\zeta+1$ & $1,...,\xi$ & $P_N\left(\exp\{-\frac{\alpha(\xi-{\xi}')}{p_r}\}-\exp\{-\frac{\alpha(\xi-{\xi}'+1)}{p_r}\}\right)$
		\\
		\hline
		$1,\dots,K-1$ & $1,\dots, d_{\min}-1$ & $\zeta$ & $1,\dots,\xi$ & $(1-P_N)\left(\exp\{-\frac{\alpha(\xi-{\xi}')}{p_r}\}-\exp\{-\frac{\alpha(\xi-{\xi}'+1)}{p_r}\}\right)$\\
		\hline
		$1,\dots,K-1$ & $1,\dots,d_{\min}-1$  & $\zeta$ & $d_{\min},\dots,d_{\max}$ & $\frac{P_N}{d_{\max}-d_{\min}+1}\exp\{-\frac{\alpha(\xi)}{p_r}\}$
		\\
		\hline
		$1,\dots,K-1$ & $d_{\min},\dots,d_{\max}$  & $\zeta$ & $1,\dots,d_{\min}-1$ & $(1-P_N)\left(\exp\{-\frac{\alpha(\xi-{\xi}')}{p_r}\}-\exp\{-\frac{\alpha(\xi-{\xi}'+1)}{p_r}\}\right)$
		\\
		\hline
		$1,\dots,K-1$ & $d_{\min},\dots,d_{\max}$  & $\zeta$ & $d_{\min},\dots,\xi$ & $(1-P_N)\left(\exp\{-\frac{\alpha(\xi-{\xi}')}{p_r}\}-\exp\{-\frac{\alpha(\xi-{\xi}'+1)}{p_r}\}\right)
	    \atop+ \frac{P_N}{d_{\max}-d_{\min}+1}\exp\{-\frac{\alpha(\xi)}{p_r}\}$
		\\
		\hline
		$1,\dots,K-1$ & $d_{\min},\dots,d_{\max}$  & $\zeta$ & $\xi+1,\dots,d_{\max}$ & $\frac{P_N}{d_{\max}-d_{\min}+1}\exp\{-\frac{\alpha(\xi)}{p_r}\}$
		\\
		\hline
		$K$ & $1,\dots,d_{\max}$  & $K$ & $1,...,\xi$ & $\exp\{-\frac{\alpha(\xi-{\xi}')}{p_r}\}-\exp\{-\frac{\alpha(\xi-{\xi}'+1)}{p_r}\}$
		\\
		\hline
		$K$ & $1,...,d_{\max}$  & $K-1$ & $d_{\min}, \dots d_{\max}$ & $\frac{1}{d_{\max}-d_{\min}+1}\exp\{-\frac{\alpha(\xi)}{p_r}\}$
		\\
		\hline
	\end{tabular}
\end{table*}

\subsection{Scheduling with Approximate Value Function}
 In this part, we use the value function $ W_{\Pi}(\widetilde {\mathbf S}) $ derived in the previous part to approximate the value function of the optimal policy $ W(\widetilde{\mathbf S}) $ in optimization problem (\ref{eqn:reduced-problem}). Specifically, in the $ t $-th frame, we apply one-step policy iteration based on the approximate value function, and the scheduling actions given system state $\mathbf{S}_t$ can be derived by the following problem.

\begin{Problem}[Sub-optimal Scheduling Problem]\label{prob: sub-optimal}
	\begin{align}
 \Pi'(\mathbf S_t)=\arg\min\limits_{\Omega({\mathbf{S}}_t)}  \bigg\{g'\left(\mathbf{S}_t,\Omega(\mathbf{S}_t)\right)+\sum_{\widetilde{\mathbf{s}}_{t+1}} \gamma \Pr\left(\widetilde{\mathbf{S}}_{t+1} | \mathbf{S}_{t}, \Omega(\mathbf{S}_t)\right) W_{\Pi}(\widetilde{\mathbf{S}}_{t+1}) \bigg\},
	\end{align}
 where $\Pi'$ is the proposed policy after one-step policy iteration from $\Pi$.
	
\end{Problem}

Problem \ref{prob: sub-optimal} can be solved by the following steps.
\begin{itemize}
\item {\bf Step 1:} For each $k\in \mathcal{U}_E(t)$, calculate
\begin{align}
G_E^{k}=	\min\limits_{p_k(t)}  \bigg\{p_k(t)\nonumber + \sum_{\widetilde{\mathbf{S}}_{t+1}} \gamma \Pr\left(\widetilde{\mathbf{S}}_{t+1} | \mathbf{S}_{t}, e_t=1, a_t=k, p_k(t)\right) W_{\Pi}(\widetilde{\mathbf{S}}_{t+1}) \bigg\},\nonumber
\end{align}
and
\begin{align}
	G_L^{k}&=C(n_t)+\min\limits_{p_k(t)}  \bigg\{p_k(t)+ \sum_{\widetilde{\mathbf{S}}_{t+1}} \gamma \Pr\left(\widetilde{\mathbf{S}}_{t+1} | \mathbf{S}_{t}, e_t=0, a_t=k, p_k(t)\right) W_{\Pi}(\widetilde{\mathbf{S}}_{t+1}) \bigg\}.\nonumber
\end{align}
Let $ p_{k,E}^{*} (t)$ and $ p_{k,L}^{*} (t)$ be the optimal power allocation of the above two problems respectively, which can be obtained by one-dimensional search. Note that if there is no arrival of new active device, i.e., $ C(n_t) = 0 $, the above two problems are the same.

\item {\bf Step 2:} If $\min_k G_{E}^k<\min_k G_{L}^k$,  the solution of Problem \ref{prob: sub-optimal} is given by $$\Pi'=\left(e_t=1,a_t=k_E^*,p_k(t)=p_{k_E^*,E}^*(t)\right).$$ where $k_E^*=\arg\min_k G_{E}^k$. Otherwise, the solution of Problem \ref{prob: sub-optimal} is given by	$$\Pi'=\left(e_t=0,a_t=k_L^*,p_k(t)=p_{k_L^*,L}^*(t)\right),$$ where $k_L^*=\arg\min_k G_{L}^k$.
\end{itemize}

The complexity of abovementioned one-step policy iteration
	is $\mathcal{O}(N_p|\mathcal{U}_E(t)|)$, where $N_p$ is the number of quantization levels of transmit power. Moreover, we have the following performance bounds on the proposed scheduling policy.

\begin{Lemma}[Performance Bounds]\label{lem: PerformanceBounds}
	Let $ W_{\Pi'} (\widetilde{\mathbf{S}})\!\triangleq\!\!\lim\limits_{T \rightarrow + \infty }\!\mathbb{E}\bigg[\! \sum_{t=1}^{T} \!\gamma^{t-1} g'\left(\mathbf{S}_t,\Pi'(\mathbf{S}_t)\right)\bigg|\widetilde{\mathbf{S}}_1=\widetilde{\mathbf{S}}\bigg]$ be the value function of the policy $\Pi'$, then
	\begin{align}
		W(\widetilde{\mathbf{S}}) \leq W_{\Pi'}(\widetilde{\mathbf{S}}) \leq	W_{\Pi}(\widetilde{\mathbf{S}}), \forall \widetilde{\mathbf{S}}.
	\end{align}
\end{Lemma}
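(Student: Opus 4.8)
The plan is to prove the two inequalities separately via the standard one-step policy iteration argument. The left inequality follows immediately from optimality, while the right inequality is the policy improvement result. For the lower bound $W(\widetilde{\mathbf{S}}) \leq W_{\Pi'}(\widetilde{\mathbf{S}})$: by Lemma \ref{lem: BE_reduced}, $W(\widetilde{\mathbf{S}})$ is the value function of the optimal policy, i.e., the minimum of the expected discounted cost over all admissible policies. Since $\Pi'$ is one particular admissible policy, its value function cannot be smaller than the optimum, which gives the left inequality directly from the definition of $W$.

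For the upper bound $W_{\Pi'}(\widetilde{\mathbf{S}}) \leq W_{\Pi}(\widetilde{\mathbf{S}})$, I would introduce the policy-evaluation (Bellman) operator $T_{\Omega}$ associated with a stationary policy $\Omega$, defined for any function $J$ on the compact state space by
\[
(T_{\Omega} J)(\widetilde{\mathbf{S}}_t) \triangleq \mathbb{E}_{\{\mathcal{H}_E(t), \mathbf{S}^N_t\}} \bigg[ g'(\mathbf{S}_t, \Omega(\mathbf{S}_t)) + \gamma \sum_{\widetilde{\mathbf{S}}_{t+1}} \Pr(\widetilde{\mathbf{S}}_{t+1} | \mathbf{S}_t, \Omega(\mathbf{S}_t)) J(\widetilde{\mathbf{S}}_{t+1}) \bigg].
\]
The key facts are that $W_{\Pi}$ is the unique fixed point of $T_{\Pi}$, i.e., $T_{\Pi} W_{\Pi} = W_{\Pi}$ (the policy-evaluation equation for the stationary baseline policy), and that by the construction of $\Pi'$ in Problem \ref{prob: sub-optimal}, $\Pi'$ minimizes the right-hand side with $W_{\Pi}$ plugged in as the continuation value. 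Since $\Pi$ itself is a feasible action in the minimization defining $\Pi'$, we obtain the improvement inequality $T_{\Pi'} W_{\Pi} \leq T_{\Pi} W_{\Pi} = W_{\Pi}$ pointwise.

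Next I would invoke the monotonicity of $T_{\Pi'}$: if $J_1 \leq J_2$ pointwise, then $T_{\Pi'} J_1 \leq T_{\Pi'} J_2$, which holds because the transition probabilities are nonnegative and $\gamma > 0$. Applying $T_{\Pi'}$ repeatedly to $T_{\Pi'} W_{\Pi} \leq W_{\Pi}$ and using monotonicity yields the decreasing chain
\[
W_{\Pi} \geq T_{\Pi'} W_{\Pi} \geq T_{\Pi'}^2 W_{\Pi} \geq \cdots \geq T_{\Pi'}^n W_{\Pi}, \quad \forall n.
\]
Since $T_{\Pi'}$ is a $\gamma$-contraction (with $\gamma < 1$) whose unique fixed point is $W_{\Pi'}$, the right-hand side converges to $W_{\Pi'}$ as $n \to \infty$, giving $W_{\Pi'} \leq W_{\Pi}$ and completing the proof.

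The main obstacle I anticipate is the careful handling of the expectation over the fast-varying components $\mathcal{H}_E(t)$ and $\mathbf{S}^N_t$ together with the compact-state reduction from Lemma \ref{lem: BE_reduced}: one must verify that $T_{\Pi'}$ is well-defined on functions of the compact state $\widetilde{\mathbf{S}}_t$ (i.e., that after taking the expectation the continuation depends only on $\widetilde{\mathbf{S}}_{t+1}$), and that the fixed-point and contraction properties carry over to this reduced formulation. Establishing the policy-evaluation identity $T_{\Pi} W_{\Pi} = W_{\Pi}$ for the baseline policy, although standard for stationary policies under discounted infinite-horizon cost, should be stated explicitly since it underpins the entire argument.
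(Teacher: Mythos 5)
Your proposal is correct and takes essentially the same route as the paper: the authors dispose of $W(\widetilde{\mathbf{S}}) \leq W_{\Pi'}(\widetilde{\mathbf{S}})$ by the suboptimality of $\Pi'$ exactly as you do, and for $W_{\Pi'}(\widetilde{\mathbf{S}}) \leq W_{\Pi}(\widetilde{\mathbf{S}})$ they simply cite the Policy Improvement Property in chapter II of \cite{Bertsekas2012Dynamic}, whose standard proof is precisely the argument you write out ($T_{\Pi'}W_{\Pi} \leq T_{\Pi}W_{\Pi} = W_{\Pi}$, then monotonicity and $\gamma$-contraction of $T_{\Pi'}$ to drive $T_{\Pi'}^{n}W_{\Pi}$ down to $W_{\Pi'}$). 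You in fact supply more detail than the paper, which leaves the fixed-point identity $T_{\Pi}W_{\Pi}=W_{\Pi}$ and the reduced-state measurability issues implicit by deferring to the textbook reference.
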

\begin{proof}
Since policy $\Pi'$ is not the optimal scheduling policy, $W(\widetilde{\mathbf{S}}) \leq	W_{\Pi'}(\widetilde{\mathbf{S}})$ is straightforward. The proof of $W_{\Pi'}(\widetilde{\mathbf{S}}) \leq W_{\Pi}(\widetilde{\mathbf{S}})$ is similar to the proof of \emph{ Policy Improvement Property} in chapter II of \cite{Bertsekas2012Dynamic}.
\end{proof}

%\begin{Remark}[{\revisesec Analytical Performance Bound}]
%In most of the existing literature on wireless resource allocation with approximate MDP \cite{YCui2010AMDP-OFDM-downlink,YCui2010AMDP-OFDM-uplink,RWang2011DistTowHopMIMO, RWang2013RelayApproxMDP,YSun2019PushingCaching,BLv2019Cache}, the performance can hardly be bounded analytically. The novel solution framework proposed in this section provides a low-complexity policy whose performance can be bounded {\revisesec by $W_{\Pi}(\widetilde{\mathbf{S}})$} analytically. 
%\end{Remark}

\section{Reinforcement Learning Algorithms}
In the previous section, the value function of the baseline scheduling policy $W_{\Pi}(\widetilde{\mathbf{S}})$ is derived analytically. However, the calculation of $W_{\Pi}(\widetilde{\mathbf S})$ requires the priori knowledge on the arrival rate of new active devices and their distribution density, which are usually unknown in practice. Therefore, a novel reinforcement learning approach is proposed in Section \ref{sec:RL} by exploiting the analytical expression of $W_{\Pi}(\widetilde{\mathbf S})$ in equation \eqref{eq:W_Pi_1} and \eqref{eq:W-Pi-3}. Moreover, different values of average receiving power $p_r$ may lead to different performance of both baseline and the proposed policies. Without the statistics of new active devices, it is difficult to optimize $p_r$ directly. Hence, we propose a SGD-based learning algorithm in Section \ref{sec:SGD} to optimize $p_r$ in an online manner. Both online learning algorithms can work simultaneously.
\subsection{Reinforcement Learning for $W_{\Pi}$}\label{sec:RL}
 According to equation \eqref{eq:W_Pi_1} and \eqref{eq:W-Pi-3}, the expression of value function $W_{\Pi}(\widetilde {\mathbf S})$ depends on the arrival rate $P_N$, the expectation of the inverse of pathloss $\varpi=\mathbb{E}_{\rho_{n_t}}[\frac{1}{\rho_{n_t}}]$, and the expected local computing cost for a new active device $\bar C=\mathbb E[C(n_t)]$. $P_N$ may not be known to the BS in advance. Moreover, $\varpi$ and $\bar C$ are the functions of distributions $\lambda$, $\pi_{\ell}$ and $\pi_f$, as defined in (\ref{eq:lambda}), (\ref{eq:pl}) and (\ref{eq:pf}), respectively which may not be known to the BS either. In order to evaluate the value function of the baseline policy, a learning algorithm is proposed below.

\begin{Algorithm}[Reinforcement Learning Algorithm]\label{alg:RL}
\ 
	\begin{itemize}
		\item \textbf{Step 1:} Let $t=0$, $n=0$, initialize $P_N^{(0)}$, $\varpi^{(0)}$ and $\bar C^{(0)}$;
		\item \textbf{Step 2:} In the $t$-th frame, let $t=t+1$  and $I_N^{(t)}$ be the new arrival indicator. Update $P_N^{(t)}$ as follows 
		\begin{align*}
			P_N^{(t)}=\frac{t-1}{t}P_N^{(t-1)}+\frac{1}{t}I_N^{(t)}.
		\end{align*}
		If there is arrival of a new active device, let $n=n+1$. Update $\varpi^{(t)}$ and $\bar C^{(t)}$ as follows
		\begin{align*}
			\varpi^{(t)}=\left\{ \begin{array}{cc}
				\varpi^{(t-1)}, \quad\mbox{if}\ I_N(t)=0,\\
				\frac{n-1}{n}\varpi^{(t-1)}+\frac{1}{n}\frac{1}{\rho^{(t)}}, \quad\mbox{if}\ I_N(t)=1,
			\end{array} \right.
		\end{align*}
		\begin{align*}
			\!\!\!\bar C^{(t)}\!\!=\!\!\!
			\left\{ \begin{array}{cc}
				\bar C^{(t-1)}, \quad\mbox{if}\ I_N(t)=0,\\
				\!\!\!\!\!\frac{\!n\!-\!1\!}{n}\bar C^{(t-\!1)}\!\!+\!\!\frac{1}{n}\!\frac{\sum\limits_{d=d_{min}}^{d_{max}}\!\!\!\!\sum\limits_{\tau=1}^{T_{loc}(d, f^{(t)}, \ell^{(t)})}\!\!\!\!\!\!\!\!\!\!\!\!\!\gamma ^{\tau}[1\!+wp_{loc}(f^{(t)}\!)]}{d_{max}-d_{min}+1}\!, 
				\!\mbox{if}\ \! I_N(\!t\!)\!\!=\!\!1,
			\end{array} \right.
		\end{align*}
		where $\rho^{(t)}$, $f^{(t)}$ and $\ell^{(t)}$ be the pathloss coefficient, CPU frequency and the application-related parameter of the new active device observed in $t$-th frame, respectively;
		\item \textbf{Step 3:} Jump to Step 2, until the iteration converges.
	\end{itemize}
\end{Algorithm}

\begin{Lemma}[Convergence]
	When $t\to \infty$, Algorithm \ref{alg:RL} will converge, i.e., 
	\begin{align}
	\lim_{t\to\infty} P_N^{(t)}&=P_N,\\
	 \lim_{t\to\infty} \varpi^{(t)}&=\varpi ,\\
	 \lim_{t\to\infty}\bar C^{(t)}&=\bar C.
	\end{align}

\end{Lemma}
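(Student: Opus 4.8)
The plan is to recognize each of the three recursions in Algorithm \ref{alg:RL} as a running average with a diminishing step size, which telescopes into a simple empirical mean, and then to invoke the strong law of large numbers (SLLN) on the underlying i.i.d. samples. First I would unroll the recursion for $P_N^{(t)}$: multiplying through by $t$ gives $t P_N^{(t)} = (t-1) P_N^{(t-1)} + I_N^{(t)}$, and an easy induction yields the closed form $P_N^{(t)} = \frac{1}{t}\sum_{s=1}^{t} I_N^{(s)}$ (the initialization $P_N^{(0)}$ is annihilated by the factor $(t-1)/t$ at $t=1$). Since, by the arrival model, the indicators $\{I_N^{(s)}\}$ are i.i.d. Bernoulli random variables with $\mathbb{E}[I_N^{(s)}] = P_N$, the SLLN immediately gives $P_N^{(t)} \to P_N$ almost surely.

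For $\varpi^{(t)}$ and $\bar C^{(t)}$ the same telescoping works, but the update fires only in frames containing an arrival. Reindexing by the arrival counter $n$, the recursion $\varpi^{(t)} = \frac{n-1}{n}\varpi^{(t-1)} + \frac{1}{n}\frac{1}{\rho^{(t)}}$ collapses to the empirical mean $\frac{1}{n}\sum_{j=1}^{n}\frac{1}{\rho_j}$ over the pathloss coefficients $\rho_j$ of the first $n$ arrived devices, and likewise $\bar C^{(t)}$ becomes the empirical mean over arrivals of the per-device quantity $\frac{1}{d_{\max}-d_{\min}+1}\sum_{d=d_{\min}}^{d_{\max}}\sum_{\tau=1}^{T_{loc}(d,f_j,\ell_j)}\gamma^{\tau}[w+p_{loc}(f_j)]$, which is precisely $\mathbb{E}_{d}[C(n_t)\mid f_j,\ell_j]$. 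Because the device locations (hence the $\rho_j$) and the pairs $(f_j,\ell_j)$ are drawn i.i.d. across devices from $\lambda$, $\pi_f$ and $\pi_{\ell}$, applying the SLLN along the arrival subsequence gives $\frac{1}{n}\sum_{j=1}^n \frac{1}{\rho_j}\to \mathbb{E}_{\rho_{n_t}}[\frac{1}{\rho_{n_t}}]=\varpi$ and, using the tower property $\mathbb{E}_{f,\ell}[\mathbb{E}_d[C\mid f,\ell]]=\mathbb{E}[C(n_t)]=\bar C$, the corresponding limit $\bar C$.

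The one point that requires care — and what I expect to be the main obstacle — is the passage from frame time $t$ to arrival count $n=n(t)$. I would argue that $n(t)\to\infty$ almost surely as $t\to\infty$: since $P_N\in(0,1]$ is strictly positive, $\sum_t \Pr[I_N^{(t)}=1]=\infty$, so by the (second) Borel--Cantelli lemma infinitely many arrivals occur almost surely, guaranteeing that the two empirical means are taken over ever-growing sample sizes. I would also verify the integrability hypotheses needed for the SLLN, namely $\mathbb{E}[1/\rho_{n_t}]<\infty$ and $\mathbb{E}[C(n_t)]<\infty$; the latter holds because $T_{loc}(d,f,\ell)$ is finite for the admissible ranges of $d,f,\ell$ and the summand $\gamma^{\tau}[w+p_{loc}(f)]$ is summable in $\tau$ for $\gamma\in(0,1)$. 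Finally, since every frame with $I_N^{(t)}=0$ leaves $\varpi^{(t)}$ and $\bar C^{(t)}$ unchanged, the full-time-index limits coincide with the arrival-subsequence limits, which completes the argument for all three quantities.
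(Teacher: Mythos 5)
Your proof is correct and takes essentially the same route as the paper, which simply notes that each recursion is a running average of i.i.d.\ unbiased observations of $P_N$, $\varpi$ and $\bar C$ and cites a standard convergence theorem for such estimators (Theorem 1, Chapter I of its reference on unbiased estimation) in place of your explicit SLLN argument. The details you supply --- telescoping the recursions into empirical means, the almost-sure divergence of the arrival counter $n(t)$ (immediate from $P_N>0$), and the integrability checks on $1/\rho_{n_t}$ and $C(n_t)$ --- are precisely the steps the paper leaves implicit, so your write-up is a fleshed-out version of the paper's one-line proof rather than a different approach.
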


\begin{proof}
	Note that $P_N$, $\varpi$ and $\bar C$ are updated with their unbiased observation, the convergence is straightforward according to Theorem 1 in Chapter I of \cite{Nikulin1993Unbiased}.
\end{proof}

\begin{Remark}[Learning Efficiency of Algorithm \ref{alg:RL}]
In conventional reinforcement learning algorithms, the value functions need to be evaluated for all state-action pairs (e.g., Q-learning method) or at least a large subset of state-action pairs (e.g., approximate MDP method). Thus, the scheduler needs to traverse a sufficiently large number of states for many times, which results in large computation complexity and slow convergence rate. In our proposed reinforcement learning algorithm, however, we only need to learn some unknown parameters of the value function which are common for all system states. This is because we have the analytical expression of the approximate value function. It is easy to see that the convergence time is significantly shortened.
\end{Remark}
\begin{table*}
	\centering  % 显示位置为中间
	\caption{Non-zero Entries of Matrix $\frac{ \mathrm{d}{\bf \Phi} }{\mathrm{d}p_r}$}  % 表格标题 
	\label{table:derivative_Phi}  % 用于索引表格的标签
	%字母的个数对应列数，|代表分割线
	% l代表左对齐，c代表居中，r代表右对齐
		\begin{tabular}{|c|c|c|c|c|}  
			\hline  % 表格的横线
			& & & & \\[-6pt]  %可以避免文字偏上来调整文字与上边界的距离
			$\zeta$&$\xi$&${\zeta}'$ & ${\xi}'$ & $[
			\frac{ \mathrm{d}{\bf \Phi} }{\mathrm{d}p_r}]_{\epsilon_{\zeta,\xi},\ \epsilon_{{\zeta}',{\xi}'}}$ \\  % 表格中的内容，用&分开，\\表示下一行
			\hline
			1 & $1,\dots,d_{\max}$ & 0& 0& $(1-P_N)\frac{\alpha(\xi)}{p_r^2}\exp\{-\frac{\alpha(\xi)}{p_r}\}$\\
			\hline
			$2,\dots,K-1$ & $1,\dots,d_{\max}$  & $\zeta-1$ & $d_{\min}, \dots, d_{\max}$ & $\frac{1-P_N}{d_{\max}-d_{\min}+1}\frac{\alpha(\xi)}{p_r^2}\exp\{-\frac{\alpha(\xi)}{p_r}\}$
			\\
			\hline
			$1,\dots,K-1$ & $1,\dots,d_{\max}$  & $\zeta+1$ & $1,\dots,\xi$ & $P_N\left(\frac{\alpha(\xi-{\xi}')}{p_r^2}\exp\{-\frac{\alpha(\xi-{\xi}')}{p_r}\}-\frac{\alpha(\xi-{\xi}'+1)}{p_r^2}\exp\{-\frac{\alpha(\xi-{\xi}'+1)}{p_r}\right)$
			\\
			\hline
			$1,\dots,K-1$ & $1,\dots,d_{\min}-1$  & $\zeta$ & $1,\dots,\xi$ & $(1-P_N)\left(\frac{\alpha(\xi-{\xi}')}{p_r^2}\exp\{-\frac{\alpha(\xi-{\xi}')}{p_r}\}-\frac{\alpha(\xi-{\xi}'+1)}{p_r^2}\exp\{-\frac{\alpha(\xi-{\xi}'+1)}{p_r}\}\right)$
			\\
			\hline
			$1,\dots,K-1$ & $1,\dots,d_{\min}-1$ & $\zeta$ & $d_{\min},\dots,d_{\max}$ & $\frac{P_N}{d_{\max}-d_{\min}+1}\frac{\alpha(\xi)}{p_r^2}\exp\{-\frac{\alpha(\xi)}{p_r}\}$\\
			\hline
			$1,\dots,K-1$ & $d_{\min},\dots,d_{\max}$ & $\zeta$ & $1,\dots,d_{\min}-1$ & $(1-P_N)\left(\frac{\alpha(\xi-{\xi}')}{p_r^2}\exp\{-\frac{\alpha(\xi-{\xi}')}{p_r}\}-\frac{\alpha(\xi-{\xi}'+1)}{p_r^2}\exp\{-\frac{\alpha(\xi-{\xi}'+1)}{p_r}\}\right)$\\
			\hline
			$1,\dots,K-1$ & $d_{\min},\dots,d_{\max}$ & $\zeta$ & $d_{\min},\dots,\xi$ & $(1-P_N)\left(\frac{\alpha(\xi-{\xi}')}{p_r^2}\exp\{-\frac{\alpha(\xi-{\xi}')}{p_r}\}-\frac{\alpha(\xi-{\xi}'+1)}{p_r^2}\exp\{-\frac{\alpha(\xi-{\xi}'+1)}{p_r}\}\right)
			\atop + \frac{P_N}{d_{\max}-d_{\min}+1}\frac{\alpha(\xi)}{p_r^2}\exp\{-\frac{\alpha(\xi)}{p_r}\}$\\
			\hline
			$1,\dots,K-1$ & $d_{\min},\dots,d_{\max}$  & $\zeta$ & $\xi+1,\dots,d_{\max}$ & $\frac{P_N}{d_{\max}-d_{\min}+1}\frac{\alpha(\xi)}{p_r^2}\exp\{-\frac{\alpha(\xi)}{p_r}\}$
			\\
			\hline
			$K$ & $1,\dots,d_{\max}$  & $K$ & $1,\dots,\xi$ & $\frac{\alpha(\xi-{\xi}')}{p_r^2}\exp\{-\frac{\alpha(\xi-{\xi}')}{p_r}\}-\frac{\alpha(\xi-{\xi}'+1)}{p_r^2}\exp\{-\frac{\alpha(\xi-{\xi}'+1)}{p_r}\}$
			\\
			\hline
			$K$ & $1,\dots,d_{\max}$  & $K-1$ & $d_{\min}, \dots, d_{\max}$ & $\frac{1}{d_{\max}-d_{\min}+1}\frac{\alpha(\xi)}{p_r^2}\exp\{-\frac{\alpha(\xi)}{p_r}\}$
			\\
			\hline
		\end{tabular}
\end{table*}

\begin{figure}[tb]
	\centering
	\includegraphics[scale=0.6]{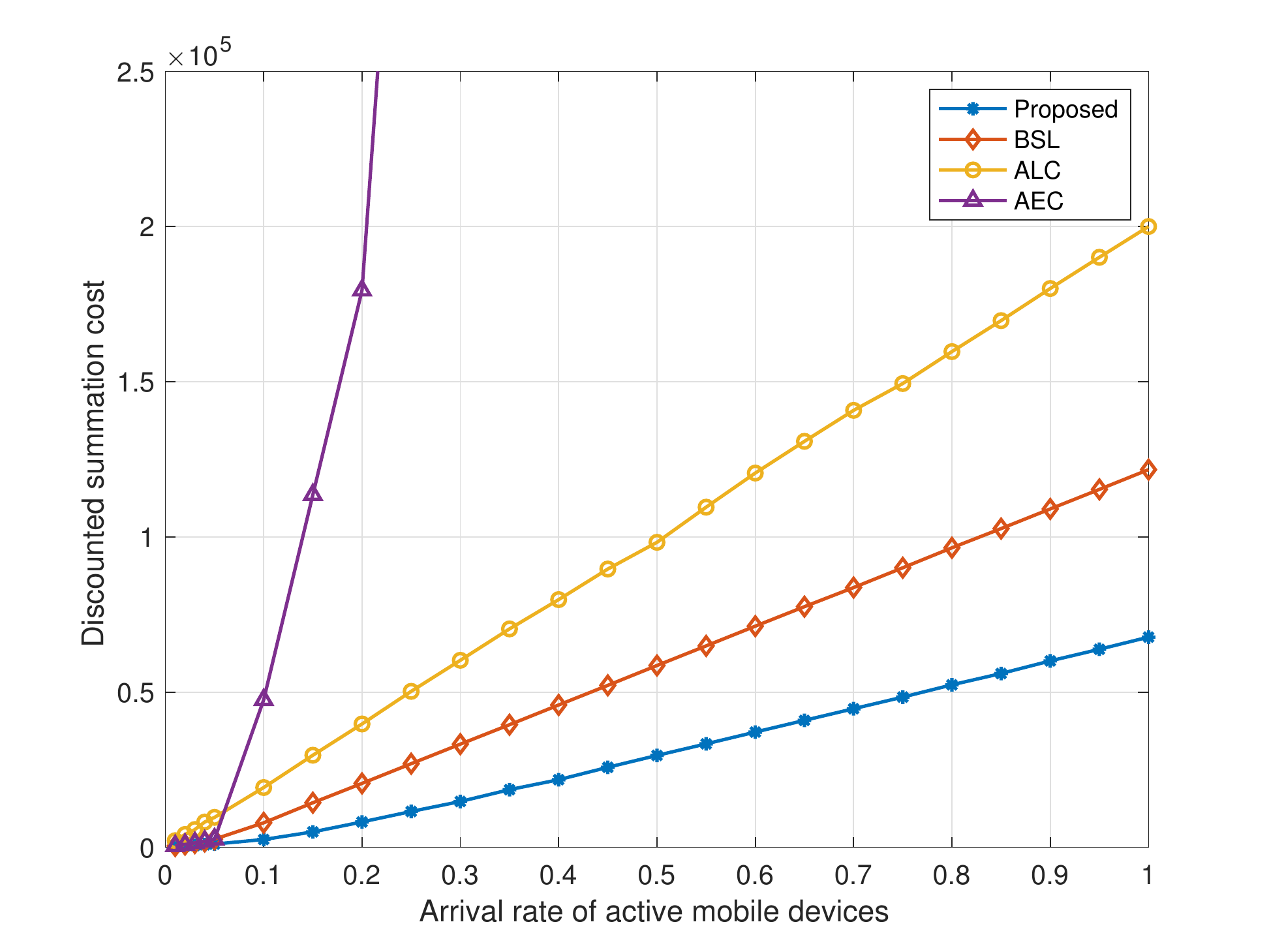}
	\caption{Discounted summation of average system costs versus arrival rate for different policies, where $p_r=2.8\times 10^{-9}$ W, initial system state $\mathbf S_1=(\mathcal U_E(t)=\emptyset, \mathcal U_L(t)=\emptyset, I_N(t)=0)$.}
	\label{fig:cost_discount}
\end{figure}

\begin{figure}[tb]
	\centering
	\includegraphics[scale=0.6]{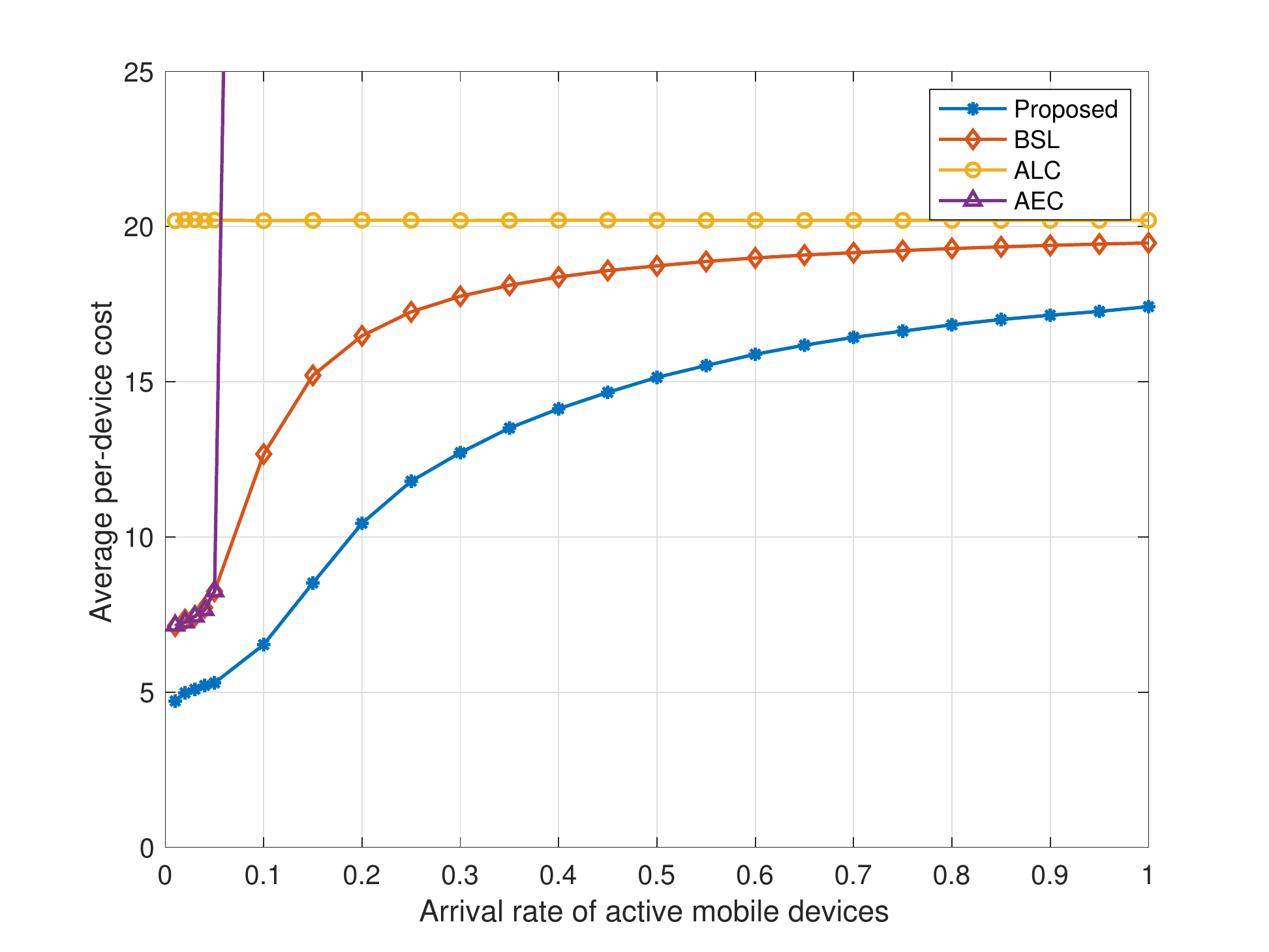}
	\caption{Average per-device costs versus arrival rate for different policies, where $p_r=2.8\times 10^{-9}$ W, initial system state $\mathbf S_1=(\mathcal U_E(t)=\emptyset, \mathcal U_L(t)=\emptyset, I_N(t)=0)$.}
	\label{fig:cost_prob}
\end{figure}

\begin{figure}[tb]
	\centering
	\includegraphics[scale=0.6]{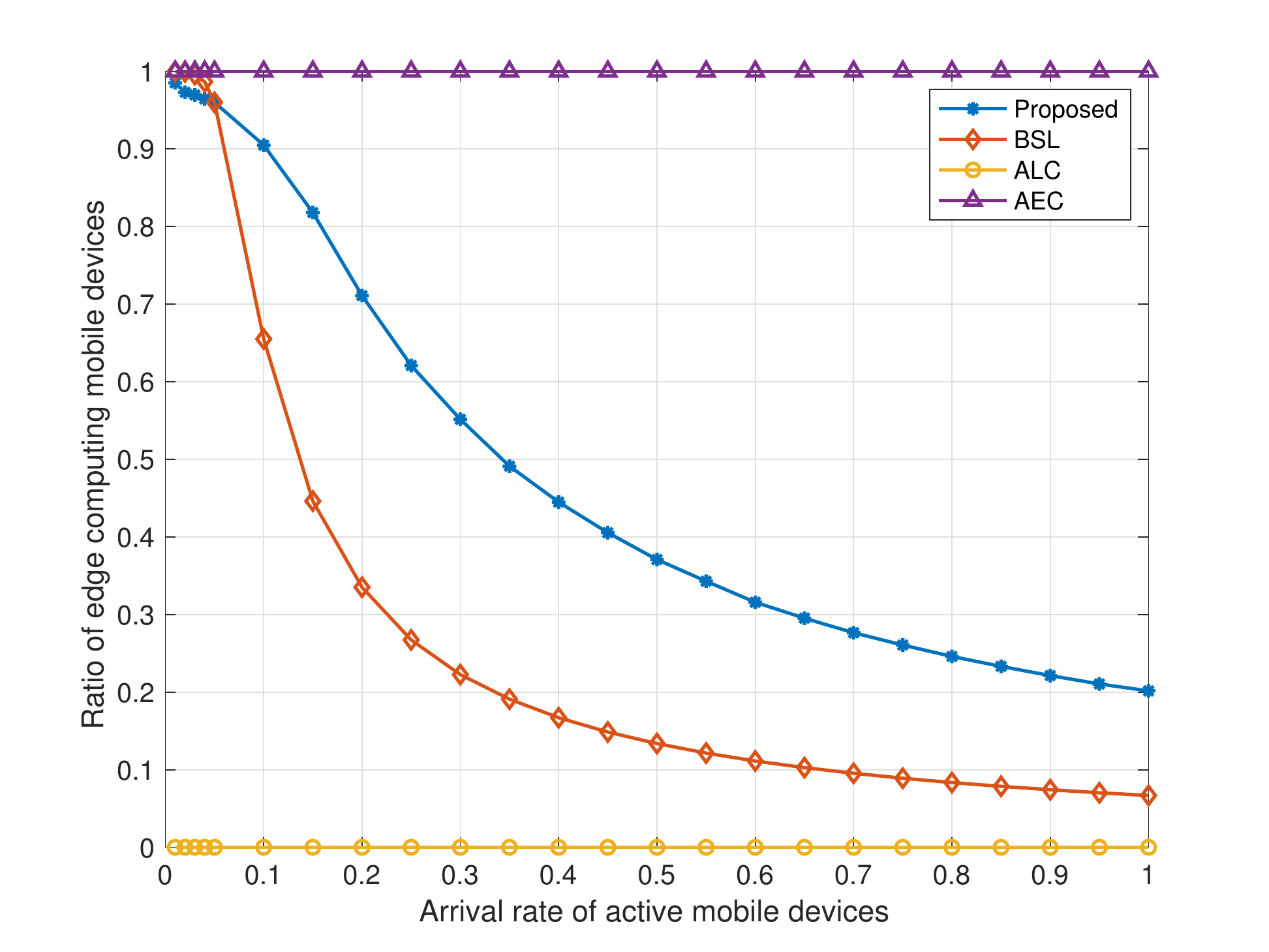}
	\caption{Ratio of edge computing devices versus arrival rate for different policies, where $p_r=2.8\times 10^{-9}$ W, initial system state $\mathbf S_1=(\mathcal U_E(t)=\emptyset, \mathcal U_L(t)=\emptyset, I_N(t)=0)$.}
	\label{fig:ratio_prob}
\end{figure}
\begin{figure}[tb]
	\centering
	\includegraphics[scale=0.6]{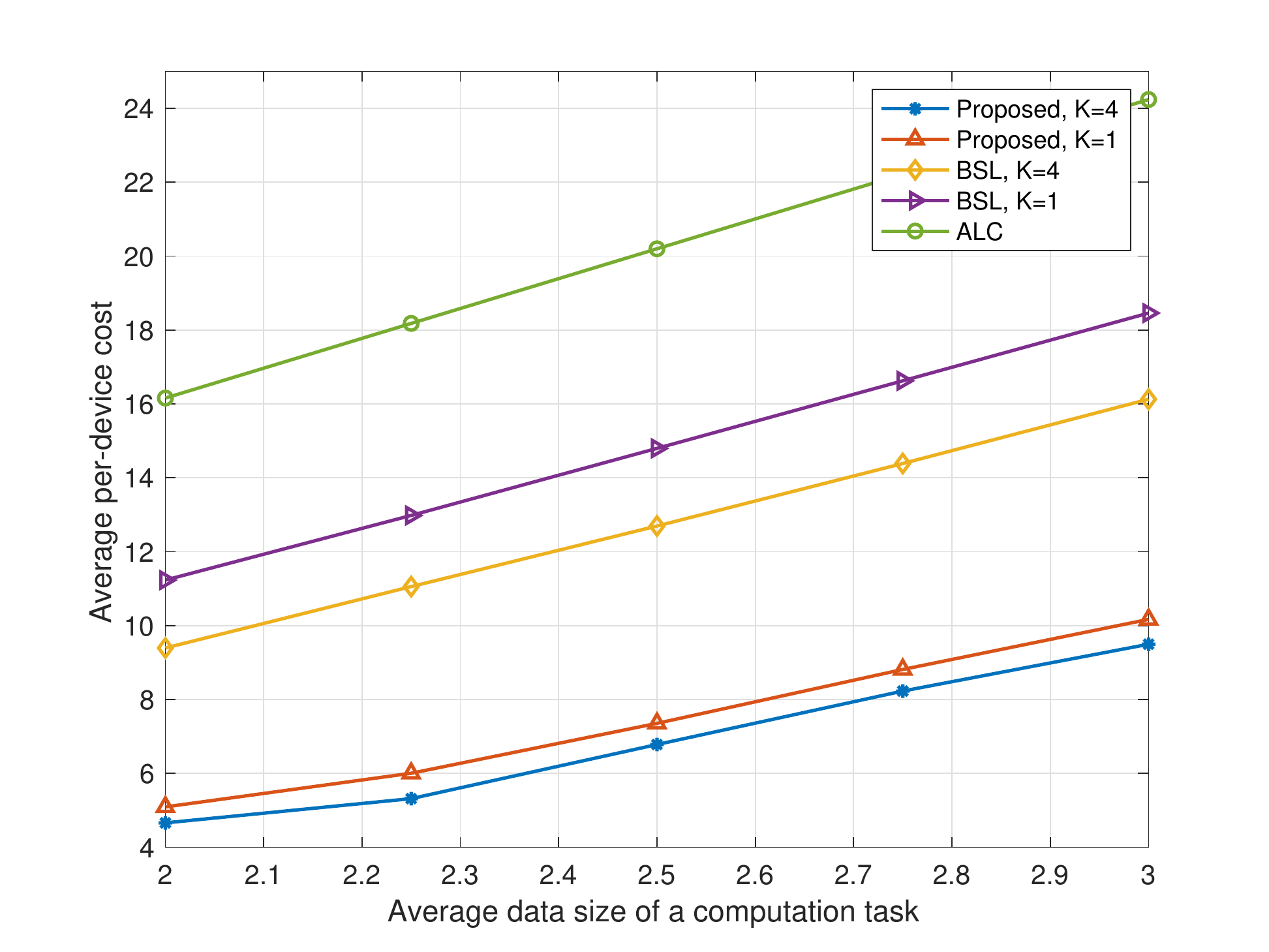}
	\caption{Average per-device costs versus task size for different policies, where $P_N=0.1$, $p_r=2.8\times 10^{-9}$ W, initial system state $\mathbf S_1=(\mathcal U_E(t)=\emptyset, \mathcal U_L(t)=\emptyset, I_N(t)=0)$.}
	\label{fig:cost_size}
\end{figure}

\begin{figure}[tb]
	\centering
	\includegraphics[scale=0.6]{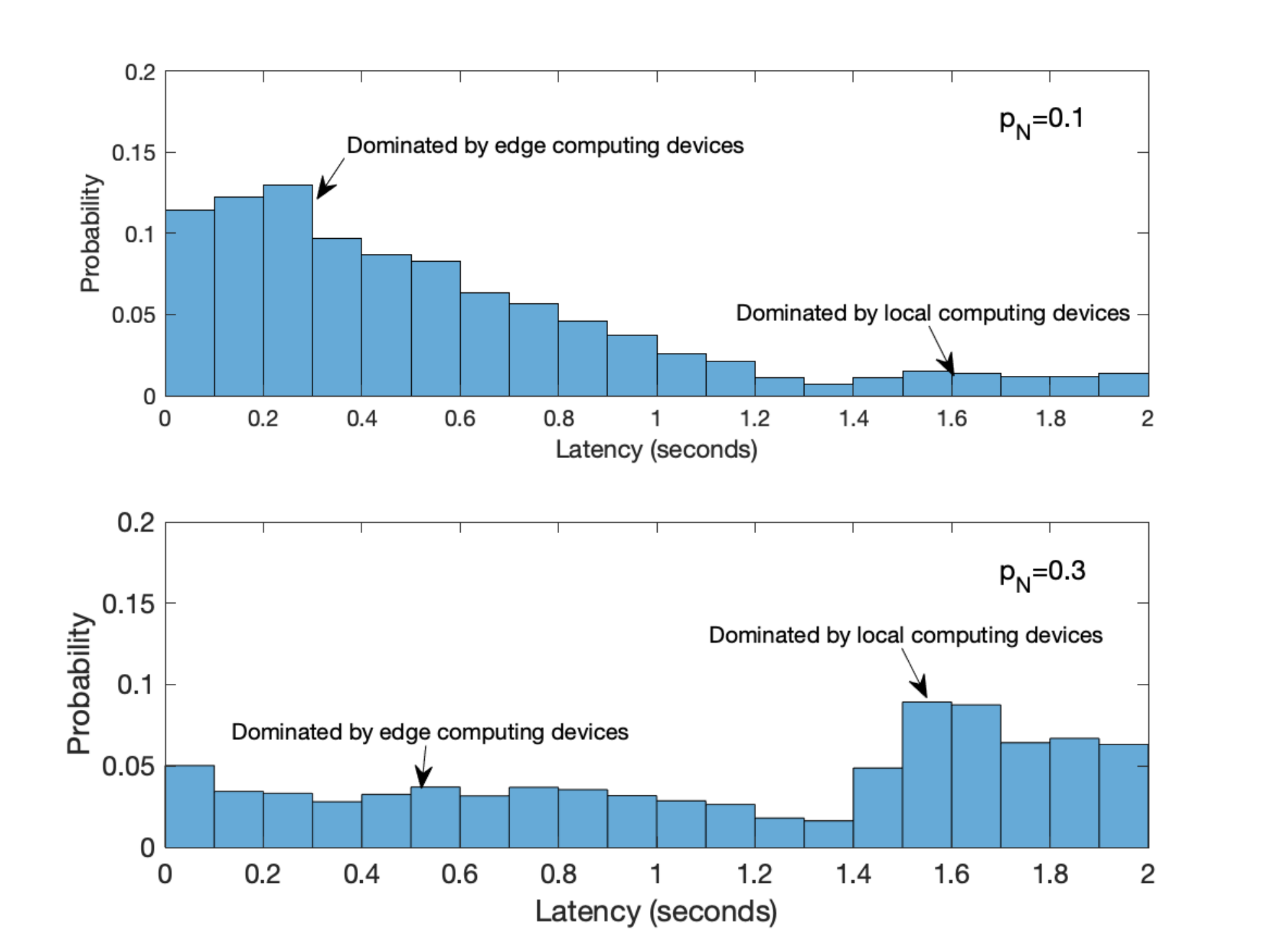}
	\caption{Latency distributions of mobile devices for different arrival rates, where $p_r=2.8\times 10^{-9}$ W, initial system state $\mathbf S_1=(\mathcal U_E(t)=\emptyset, \mathcal U_L(t)=\emptyset, I_N(t)=0)$.}
	\label{fig:latency_dist}
\end{figure}

\begin{figure}[tb]
	\centering
	\includegraphics[scale=0.6]{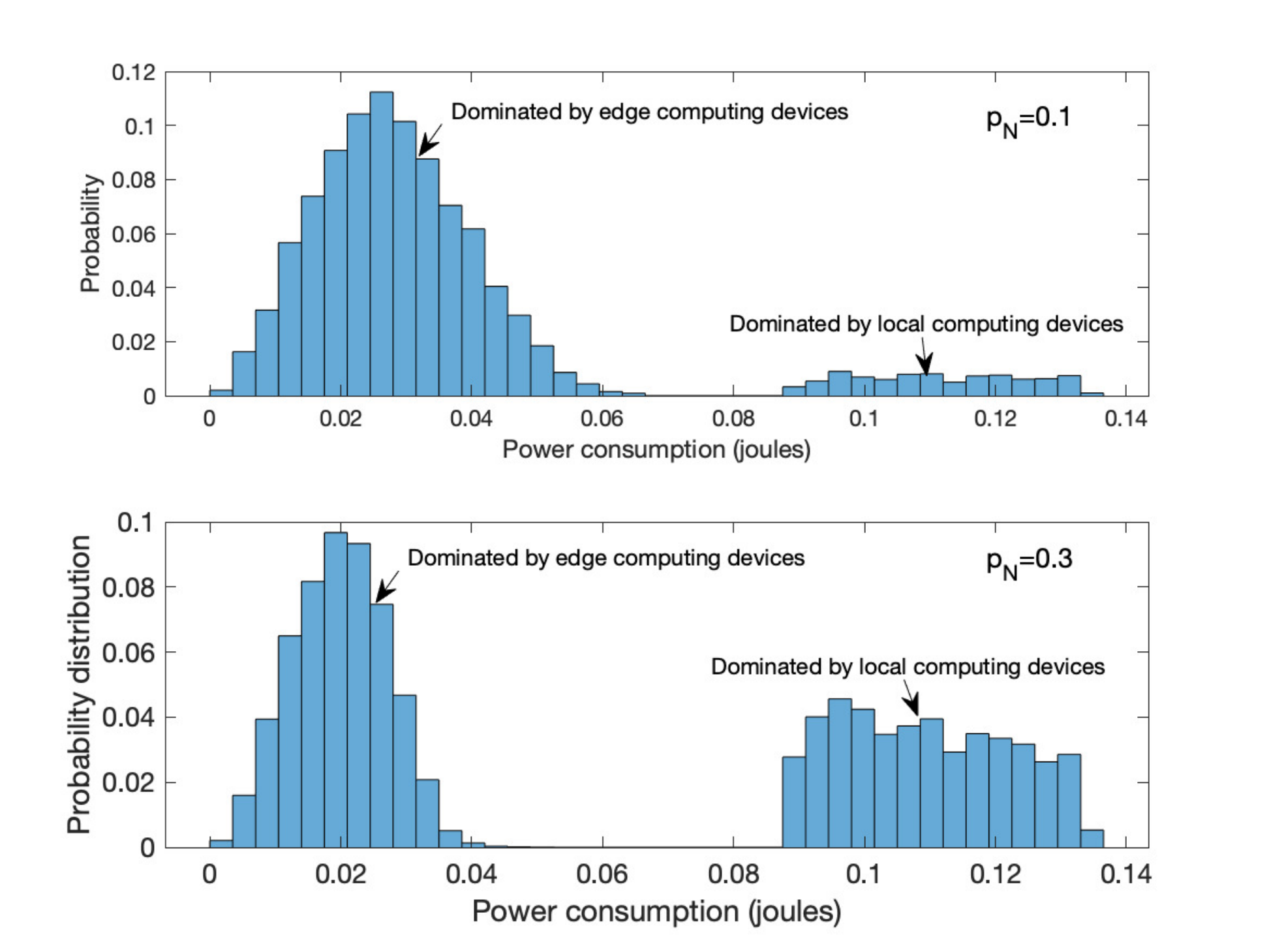}
	\caption{Power consumption distributions of mobile devices for different arrival rates, where $p_r=2.8\times 10^{-9}$ W, initial system state $\mathbf S_1=(\mathcal U_E(t)=\emptyset, \mathcal U_L(t)=\emptyset, I_N(t)=0)$.}
	\label{fig:power_dist}
\end{figure}

\begin{figure}[tb]
	\centering
	\includegraphics[scale=0.6]{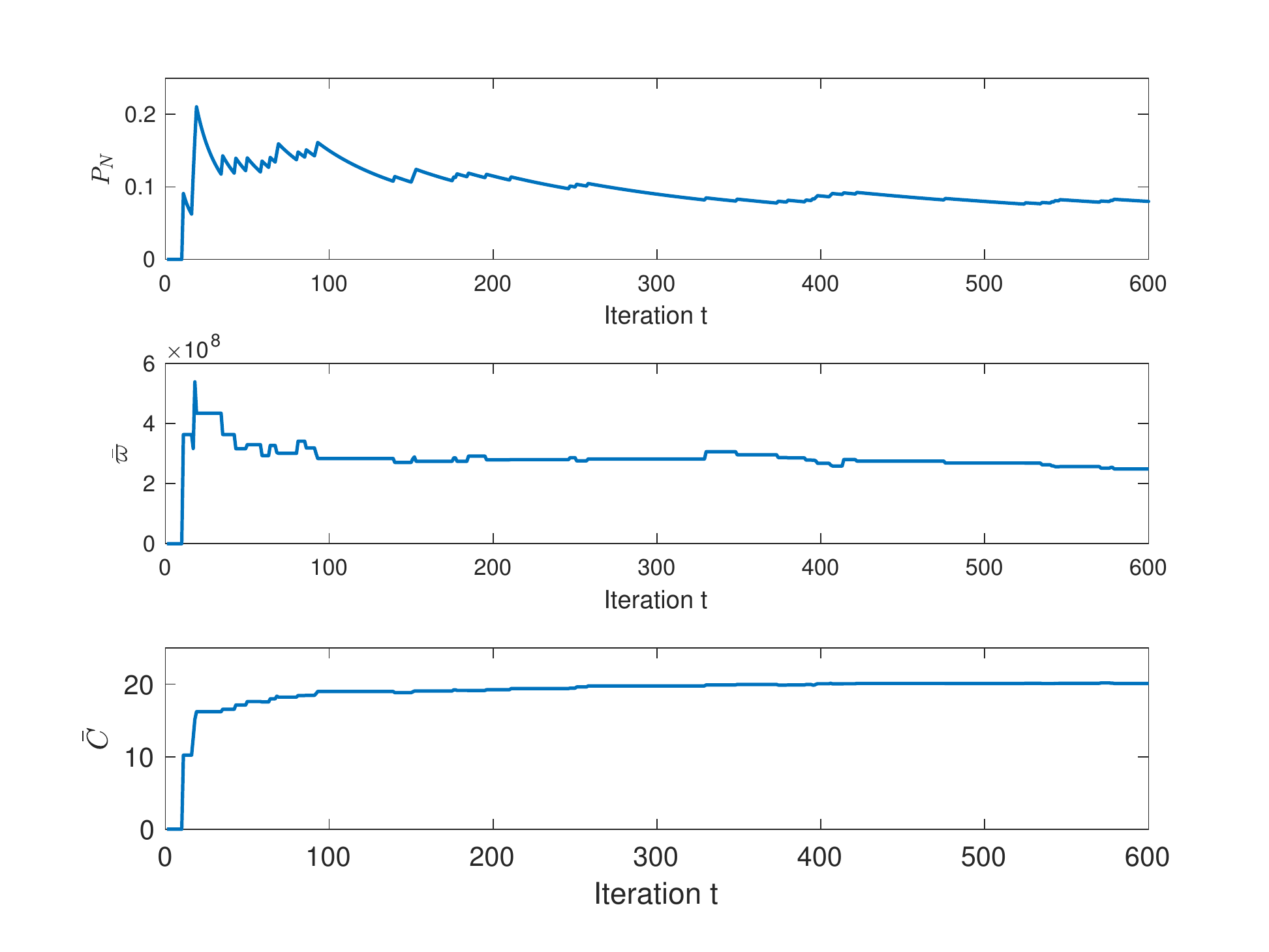}
	\caption{Convergence of the reinforcement learning algorithm.}
	\label{fig:learning_curve}
\end{figure}

\begin{figure}[tb]
	\centering
	\includegraphics[scale=0.6]{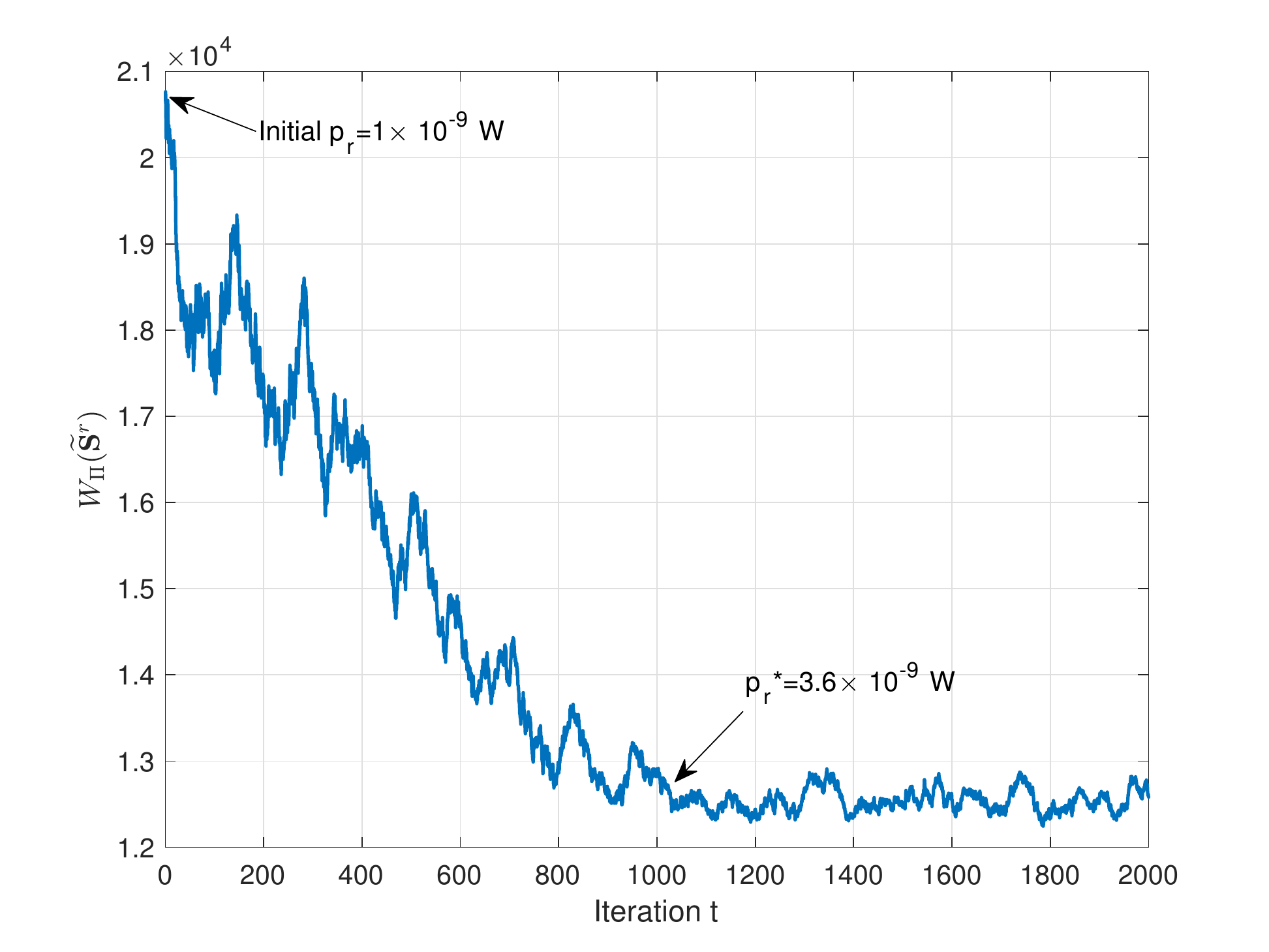}
	\caption{Convergence of the SGD algorithm, where $P_N=0.1$.}
	\label{fig:sgd_curve}
\end{figure}

\begin{figure}[tb]
	\centering
	\includegraphics[scale=0.6]{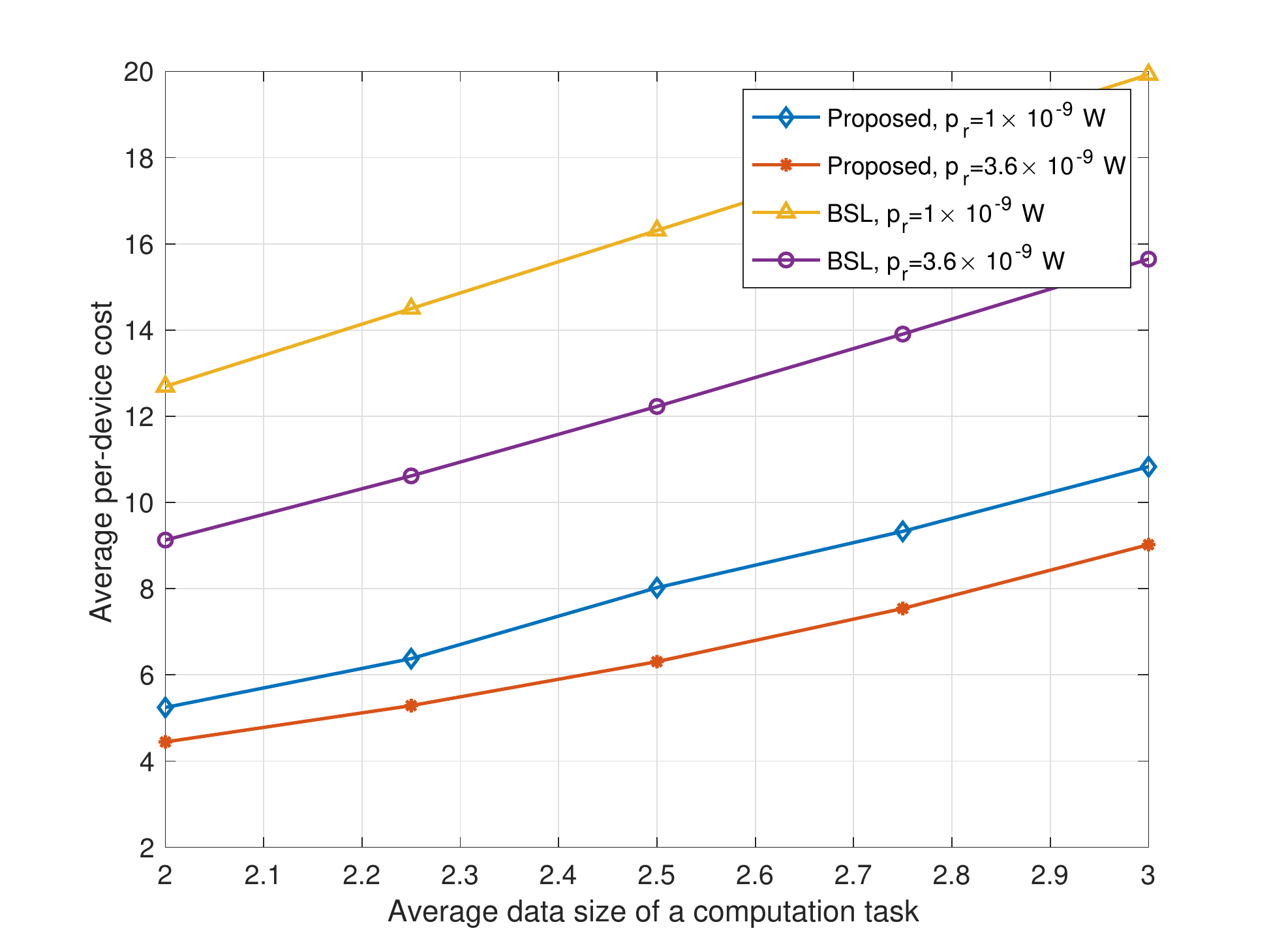}
	\caption{Performance gain of the optimized $p_r^*$, where $P_N=0.1$, $p_r^*=3.6\times 10^{-9}$ W, initial system state $\mathbf S_1=(\mathcal U_E(t)=\emptyset, \mathcal U_L(t)=\emptyset, I_N(t)=0)$.}
	\label{fig:cost_pr}
\end{figure}

\begin{figure}[tb]
	\centering
	\includegraphics[scale=0.6]{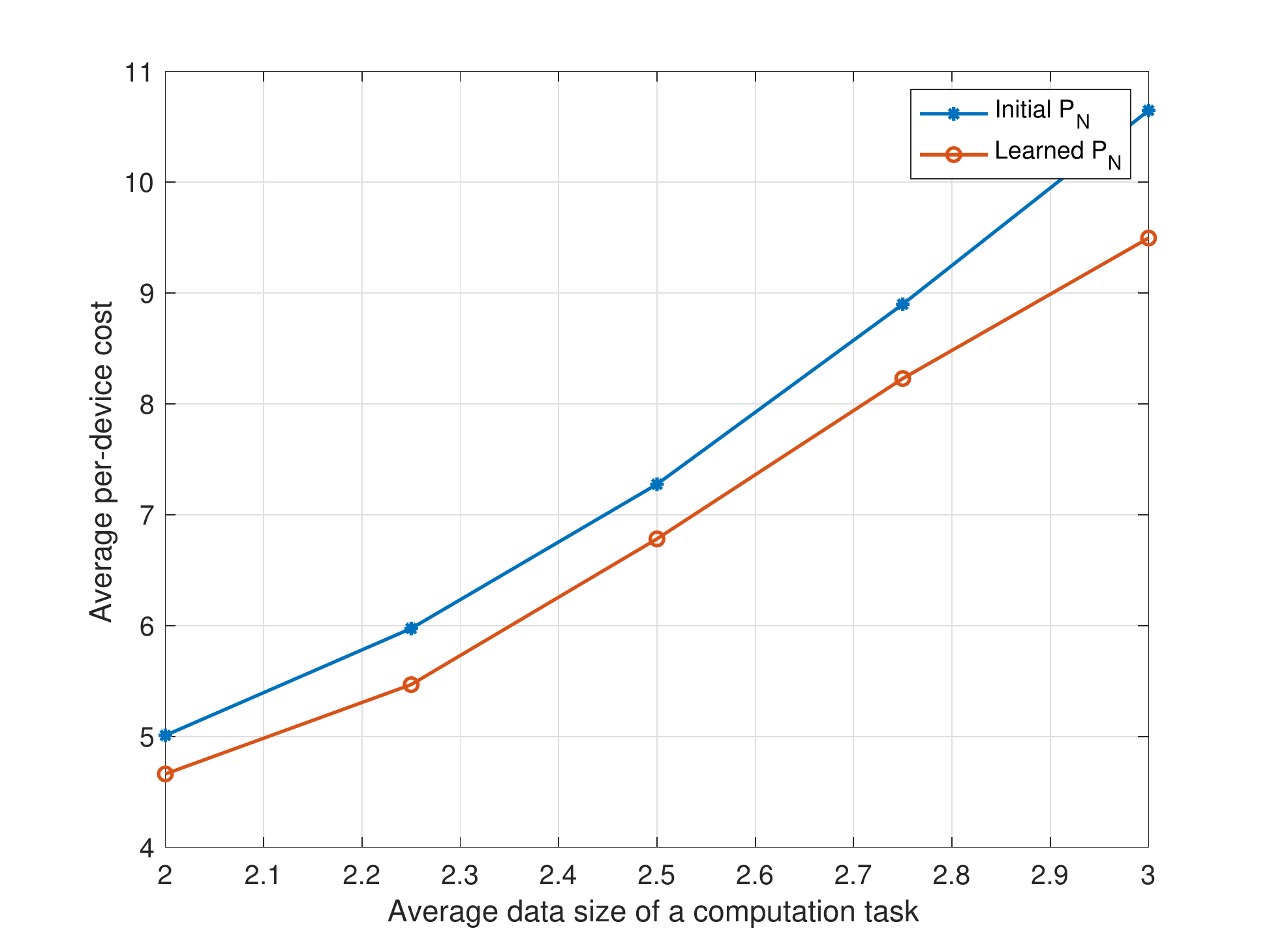}
	\caption{Performance with initial guess $P_N=0.2$ and learned $P_N=0.1$, where $p_r=2.8\times 10^{-9}$ W, initial system state $\mathbf S_1=(\mathcal U_E(t)=\emptyset, \mathcal U_L(t)=\emptyset, I_N(t)=0)$.}
	\label{fig:mismatch}
\end{figure}

\subsection{Optimization of $p_r$ via SGD}\label{sec:SGD}
In this part, we improve the baseline policies by optimizing the average receiving power $p_r$ in the baseline policy. Note that the average system cost is a function of initial system state. It may not be feasible to minimize the average system cost for all the possible initial system states by adjusting $p_r$. Hence, we propose to minimize $W_{\Pi}$ w.r.t. the reference state $\widetilde{\mathbf S}^r$. 
Define the following state without any edge computing device as reference state 
	$$ \widetilde{\mathbf{S}}^{r}\triangleq (\mathcal{U}_E=\emptyset, \mathcal{G}_E=\emptyset, \mathcal{Q}_E=\emptyset).$$

	Then the optimization on  $p_r$ can be written as follows.
	
\begin{Problem}[Optimization of $p_r$]\label{prob:subopt_pr}
\begin{align}
	p_r^*&=\arg\min_{p_r} W_{\Pi}(\widetilde{\mathbf{S}}^r)\nonumber\\
	&=\arg\min_{p_r} \widetilde{\mathbf{v}}^{\mathsf T} (\mathbf{I}-\gamma {\bf \Phi}(p_r))^{-1}\mathbf{c}(p_r),
\end{align}
where $ \widetilde{\mathbf{v}}=[1 \ 0 \ \dots \ 0]^{\mathsf T} \in \mathbb{R}^{(Kd_{\max}+1)\times 1}$.
\end{Problem}

Without the distribution knowledge of $\lambda(\mathbf l)$, a sub-optimal solution of Problem \ref{prob:subopt_pr} can be obtained by the stochastic gradient descent approach. We first introduce the following conclusion on the gradient of $W_{\Pi}(\widetilde{\mathbf{S}}^{r})$ w.r.t. $p_r$.

\begin{Lemma}[Gradient of $W_{\Pi}(\widetilde{\mathbf{S}}^{r})$]\label{lem:gradient}
The derivative of $W_{\Pi}(\widetilde{\mathbf{S}}^{r})$ w.r.t. $p_r$ is given by
\begin{align}\label{eq:derivative_W}
    \frac{\mathrm{d}W_{\Pi}(\widetilde{\mathbf{S}}^{r})}{\mathrm{d}p_r}=\widetilde{\mathbf{v}}^{\mathsf T}(\mathbf{I}-\gamma {\bf \Phi}(p_r))^{-1}\biggl[\frac{\mathrm d\mathbf{c}(p_r)}{\mathrm{d}p_r} 
    +\gamma\frac{\mathrm{d}{\bf \Phi}(p_r)}{\mathrm{d}p_r} (\mathbf{I}-\gamma {\bf \Phi}(p_r))^{-1}\mathbf{c}(p_r)\biggr],
\end{align}
where $\frac{\mathrm{d}{\bf \Phi}(p_r)}{\mathrm{d}p_r}$ is the entry-wise derivative of ${\bf \Phi}(p_r)$ w.r.t. $p_r$. The non-zero entries of $\frac{\mathrm{d}{\bf \Phi}(p_r)}{\mathrm{d}p_r}$ are given in table \ref{table:derivative_Phi}, and other entries are all $0$. Moreover, 
\begin{align}\label{eq:derivative_g}
\frac{\mathrm d\mathbf c(p_r)}{\mathrm dp_r}=\bigg[0, \underbrace{\varpi, \dots, \varpi}_{d_{\max}\  \mbox{items}}\bigg].
\end{align}
% \frac{d\mathbf g(p_r)}{dp_r}=\bigg[0, \underbrace{\frac{d\mathbb E_{\rho}[\frac{p_r}{\rho}]}{dp_r}, \dots, \frac{d\mathbb E_{\rho}[\frac{p_r}{\rho}]}{dp_r}}_{d_{max}}\bigg].
\end{Lemma}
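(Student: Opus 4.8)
The plan is to differentiate the closed-form expression established in Problem~\ref{prob:subopt_pr}, namely $W_{\Pi}(\widetilde{\mathbf{S}}^{r}) = \widetilde{\mathbf{v}}^{\mathsf T}(\mathbf{I}-\gamma{\bf \Phi}(p_r))^{-1}\mathbf{c}(p_r)$, directly with respect to $p_r$. Here $\widetilde{\mathbf{v}}$ is a constant vector independent of $p_r$, whereas both the transition matrix ${\bf \Phi}(p_r)$ and the cost vector $\mathbf{c}(p_r)$ carry the $p_r$-dependence. Applying the product rule therefore splits the derivative into a contribution from the resolvent $(\mathbf{I}-\gamma{\bf \Phi})^{-1}$ and a contribution from $\mathbf{c}$.

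First I would treat the resolvent term. Writing $\mathbf{A}(p_r)\triangleq\mathbf{I}-\gamma{\bf \Phi}(p_r)$ and differentiating the identity $\mathbf{A}\mathbf{A}^{-1}=\mathbf{I}$ gives the standard matrix-inverse derivative $\frac{\mathrm{d}\mathbf{A}^{-1}}{\mathrm{d}p_r}=-\mathbf{A}^{-1}\frac{\mathrm{d}\mathbf{A}}{\mathrm{d}p_r}\mathbf{A}^{-1}$. Since $\frac{\mathrm{d}\mathbf{A}}{\mathrm{d}p_r}=-\gamma\frac{\mathrm{d}{\bf \Phi}}{\mathrm{d}p_r}$, this equals $\gamma(\mathbf{I}-\gamma{\bf \Phi})^{-1}\frac{\mathrm{d}{\bf \Phi}}{\mathrm{d}p_r}(\mathbf{I}-\gamma{\bf \Phi})^{-1}$. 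Combining this with the second term $(\mathbf{I}-\gamma{\bf \Phi})^{-1}\frac{\mathrm{d}\mathbf{c}}{\mathrm{d}p_r}$ and factoring out the common left factor $\widetilde{\mathbf{v}}^{\mathsf T}(\mathbf{I}-\gamma{\bf \Phi})^{-1}$ reproduces exactly equation~\eqref{eq:derivative_W}. I would remark that the invertibility of $\mathbf{I}-\gamma{\bf \Phi}$, already used in Lemma~\ref{lem:W3}, holds because $\gamma\in(0,1)$ and ${\bf \Phi}$ is (sub-)stochastic, so the spectral radius of $\gamma{\bf \Phi}$ is below one and the differentiation is well-defined.

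The remaining step is to evaluate the two entry-wise derivatives. For $\frac{\mathrm{d}{\bf \Phi}}{\mathrm{d}p_r}$ I would differentiate each nonzero entry listed in Table~\ref{table:Phi} term by term; every such entry is a constant-coefficient combination of exponentials of the form $\exp\{-\alpha(\cdot)/p_r\}$, and the chain rule gives $\frac{\mathrm{d}}{\mathrm{d}p_r}\exp\{-\alpha(x)/p_r\}=\frac{\alpha(x)}{p_r^2}\exp\{-\alpha(x)/p_r\}$, which yields the entries of Table~\ref{table:derivative_Phi}. For $\frac{\mathrm{d}\mathbf{c}}{\mathrm{d}p_r}$ I would use $\mathbb{E}_{\rho_{n_t}}[p_r/\rho_{n_t}]=p_r\varpi$: the entry with $\zeta=0$ is identically zero, while each entry with $\zeta\ge 1$ contains the single $p_r$-dependent summand $p_r\varpi$ (the $w\zeta$ and $P_N\mathbb{E}[C(n_t)]$ parts being constants in $p_r$), whose derivative is $\varpi$; this gives equation~\eqref{eq:derivative_g}.

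I do not expect a genuine obstacle, since the statement is essentially an exercise in matrix calculus. The only points demanding care are preserving the order of the non-commuting matrix factors when applying the inverse-derivative identity, and verifying that the $p_r$-dependence is confined to ${\bf \Phi}$ and $\mathbf{c}$ so that differentiating $\widetilde{\mathbf{v}}$ contributes nothing.
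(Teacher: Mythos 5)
Your proposal is correct and follows essentially the same route as the paper's Appendix C: differentiate the closed form $W_{\Pi}(\widetilde{\mathbf{S}}^{r})=\widetilde{\mathbf{v}}^{\mathsf T}(\mathbf{I}-\gamma{\bf \Phi}(p_r))^{-1}\mathbf{c}(p_r)$ via the product rule and the identity $\frac{\mathrm{d}\mathbf{A}^{-1}}{\mathrm{d}p_r}=-\mathbf{A}^{-1}\frac{\mathrm{d}\mathbf{A}}{\mathrm{d}p_r}\mathbf{A}^{-1}$, then read off the entry-wise derivatives of ${\bf \Phi}$ and $\mathbf{c}$. Your added remarks on the invertibility of $\mathbf{I}-\gamma{\bf \Phi}$ (already established in the proof of Lemma~\ref{lem:W3}) and the chain-rule verification of Table~\ref{table:derivative_Phi} are consistent with, and slightly more explicit than, the paper's own argument.
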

\begin{proof}
	Please refer to Appendix C.
\end{proof}

In the gradient expression (\ref{eq:derivative_W}), $\mathbf c(p_r)$ is the function of $\varpi$ which is the expectation of  a function depending on the new active devices' pathloss. Thus, $\varpi$ is unknown in advance. Hence, the following stochastic gradient descent algorithm is developed to optimize $p_r$ together with the learning of $\varpi$.

\begin{Algorithm}[Stochastic Gradient Descent Algorithm]\label{alg:SGD} The stochastic gradient descent algorithm to obtain $p_r^{*}$ is elaborated by the following steps.
\begin{itemize}
    \item \textbf{Step 1:} Let $n=0$, $\varpi^{(0)}$ be the initial value of $\varpi$, and $p_r^{(0)}$ be the initial value of $p_r$ ;
    \item \textbf{Step 2:} If there is arrival of a new active device, let $n=n+1$ and $\rho^{(n)}$ be its pathloss coefficient. Update $\varpi$ and $p_r$ according to 
	\begin{align*}
		\varpi^{(n)}=\frac{n-1}{n}\varpi^{(n-1)}+\frac{1}{n}\frac{1}{\rho^{(n)}},
	\end{align*}
	and
	\begin{align*}
	p_r^{(n)}=&p_r^{(n-1)}-\eta_{n}\widetilde{\mathbf{v}}^{\mathsf T}(\mathbf{I}-\gamma {\bf \Phi}(p_r))^{-1}
	\biggl[[0\  \varpi^{(n)}\  \varpi^{(n)} \dots]^{\mathsf T} \nonumber\\
	&+\gamma\frac{\mathrm{d}{\bf \Phi}(p_r)}{dp_r} (\mathbf{I}-\gamma {\bf \Phi}(p_r))^{-1}[0\  \varpi^{(n)}p_r\  \varpi^{(n)}p_r \dots]^{\mathsf T}\biggr]\bigg|_{p_r^{(n-1)}},&
	\end{align*}

    where $\eta_n$ are step sizes satisfying 
    $$\sum_{n=1}^{\infty}\eta_n=\infty, \quad \sum_{n=1}^{\infty}\eta_n^2<\infty;$$
    \item \textbf{Step 3:} Repeat Step 2, until the iteration converges.
\end{itemize}
\end{Algorithm}

The convergence of the above Algorithm \ref{alg:SGD} follows the standard proof established in \cite{Goodfellow2016DL}, and its performance will be demonstrated in the following section by numerical simulations.

\section{Simulation Results}
In this section, we evaluate the performance of the proposed low-complexity sub-optimal scheduling policy by numerical simulations. In the simulations, the frame duration $T_s=10$ ms. The input data size of each task is uniformly distributed between $ 200 $ and $ 300 $ segments, each with a size of $ 10 $ Kb. Local CPU frequency for each active mobile device is randomly drawn between $0.6\sim1 $ GHz and $ 560\sim600 $ CPU cycles are needed to compute 1-bit input data. The effective switched capacitance is $\kappa=1.2\times 10^{-28}$. Moreover, the cell radius is set to 400 m and the mobile devices are uniformly distributed in the cell region. The uplink transmission bandwidth is $W=10 \ \mbox{MHz} $, noise power is $\sigma_z^2=1\times 10^{-9}$ W and pathloss exponent is 3.5. The weight on latency is set as 0.05. We compare our proposed scheduling policy with three benchmark policies including (1) the \emph{baseline policy} (BSL) as elaborated in section IV-A and we set $K=4$ by default except for Fig. \ref{fig:cost_size} where we compare the performance with $K=1$ and $K=4$; (2) \emph{all local computing policy} (ALC), where all the active devices execute their tasks locally; and (3) \emph{all edge computing policy} (AEC), where all the active devices offload their tasks to the MEC server. The simulation results are shown in the following aspects.

\textbf{Impacts of arrival rate and task size:} Fig. \ref{fig:cost_discount} shows the discounted summation of average system costs versus the arrival rate of active devices for the proposed scheme and different benchmarks. It can be seen that the costs of our proposed scheme and two benchmarks, i.e. BSL and ALC grow approximately linearly with the increase of arrival rate. However, the cost of AEC quickly grows unbounded when the arrival rate becomes large, which is caused by the limited uplink transmission resources. Moreover, the discounted  cost of our proposed scheduling scheme is always significantly lower than the benchmarks, which also demonstrates the cost upper bound derived in Lemma \ref{lem: PerformanceBounds}. Fig. \ref{fig:cost_prob} shows the average per-device costs versus the arrival rates of active devices. It can be observed that the average per-device costs of all the policies grow with the increase of arrival rate except ALC policy. For ALC policy, since all the active devices compute their tasks locally, the arrival rate has no influence on the average per-device cost. For AEC policy, the average per-device cost grows quickly with the increase of arrival rate due to limited wireless transmission capability. It is also shown that our proposed policy always outperforms  BSL policy in average per-device cost especially when the arrival rate falls in the region of $ (0,0.4) $. Besides, it can be seen that when the arrival rate is sufficiently large, the average per-device costs of both BSL policy and our proposed policy converge to the cost of ALC policy. This observation can be explained by Fig. \ref{fig:ratio_prob} which shows that the ratio of edge computing devices tends to $ 0 $ for sufficiently large arrival rate. This is because of the limited uplink transmission bandwidth. Moreover, as shown in Fig. \ref{fig:ratio_prob}, the ratio of edge computing devices of our proposed policy is remarkably lager than that of BSL policy. Hence, our proposed policy can better exploit the MEC server to save the energy consumption of mobile devices and reduce latency. Fig. \ref{fig:cost_size} shows the average per-device costs versus the average input data size of each computation task for different scheduling policies. It can be observed that the average per-device costs grow almost linearly with the increase of task size for different scheduling policies. In comparison, the average per-device costs have different trends in Fig. \ref{fig:cost_prob}, where the per-device cost will saturate for large arrival rate. 

\textbf{Impacts of the parameter $K$ of the baseline policy:}  Fig. \ref{fig:cost_size} also shows the impacts of $K$ on both the baseline policy and our proposed policy, where the curves of average per-device cost versus task size for $K=1$ and $K=4$ are plotted. It can be seen that the baseline policy with $K=4$ can achieve better performance than that with $K=1$. It also results in a lower average per-device cost of our proposed policy. Hence, by properly choosing $K$ of baseline policy, the performance of our scheduling algorithm can be improved.

\textbf{PMFs of latency and power consumption:} Fig. \ref{fig:latency_dist} and Fig. \ref{fig:power_dist} show the PMFs of latency and power consumption of the mobile devices with our proposed scheduling policy, respectively. The left and right parts of the PMFs in Fig. \ref{fig:latency_dist} and Fig. \ref{fig:power_dist} are mainly contributed by edge computing devices and local computing devices, respectively. It can be observed that, for larger arrival rate  $P_N$, the latency of edge computing increases in general, and more devices are scheduled for local computing. Thus, the overall average per-device latency and the average per-device power consumption increase with $P_N$ due to relatively larger local computing latency and power consumption as shown in Fig. \ref{fig:latency_dist} and Fig. \ref{fig:power_dist}.

\textbf{Reinforcement learning and SGD-based optimization of $p_r$:} Fig. \ref{fig:learning_curve} shows the convergence of the reinforcement learning algorithm. It can be seen that the learning processes converge after around 200 observations for $P_N$, $\varpi$ and $\bar C$. The number of observations required for our proposed reinforcement learning algorithm is much smaller than the number of system states in our problem. In contrast, the number of observations required for the convergence of conventional reinforcement learning algorithms is typically much larger than the number of system states. This demonstrates the efficiency of the proposed reinforcement learning algorithm, which benefits from the derived expression of the approximate value function.  The performance of the reinforcement learning is demonstrated in Fig. \ref{fig:mismatch}, where the performance with initial $P_N$ and the learned one are compared. It can be observed that the performance is remarkably improved with the learned value of $P_N$. Fig. \ref{fig:sgd_curve} shows the iteration steps of the SGD algorithm towards a local optimal average receiving power level $p_r^*$. It can be observed that the SGD algorithm converges after around 1000 iterations and the optimized $p_r^*$ is about $3.6\times 10^{-9}$ W. Fig. \ref{fig:cost_pr} shows the performance of both the baseline policy and our proposed scheduling policy can be improved by using the optimized $p_r^*=3.6\times 10^{-9}$ W, compared with its initial value $p_r=10^{-9}$ W. This justifies the necessity of the SGD-based power optimization algorithm.

\section{Concluding Remarks}
% In this paper, we formulate the scheduling design of a multi-user MEC system as an infinite-horizon MDP, where the dynamics of user arrival and departure with random location distribution are considered in the optimization. We propose a novel low-complexity solution framework to obtain a sub-optimal policy via an analytical approximation of value function. Compared with most of the existing resource allocation works with the approach of approximate MDP, where the performance can hardly be bounded, we provide an analytical upper bound on the discounted summation of average system cost of the proposed scheduling policy. Moreover, to tackle the unknown system statistics in practice, a novel and efficient reinforcement learning algorithm is proposed, and a stochastic gradient descend (SGD) algorithm is devised to improve both the baseline and the sub-optimal policies. Simulation results demonstrate the significant performance gain of the proposed scheduling policy over various benchmarks. The method proposed in this paper is not limited to the MEC systems, it can be further generalized to resource allocation problems with random user arrival and departure in other scenarios. 
In this paper, we formulate the scheduling of a multi-user MEC system with random user arrivals as an infinite-horizon MDP, and jointly optimize the offloading decision, uplink transmission device selection and power allocation. To avoid the curse of dimensionality, we propose a novel low-complexity solution framework to obtain a sub-optimal policy via an analytical approximation of value function. Moreover, to tackle the unknown system statistics in practice, a novel and efficient reinforcement learning algorithm is proposed, and an SGD algorithm is devised to improve both the baseline and the sub-optimal policies. Simulation results demonstrate the significant performance gain of the proposed scheduling policy over various benchmarks.

This work enriches the methodology of approximate MDP for solving resource allocation problems in communication and computing systems. The solution framework proposed in this work can be further applied to the scenarios where the edge computing latency is not negligible. Moreover, it can be generalized to solve many other resource allocation problems with random user arrivals and departures.

\section*{Appendix A: Proof of Lemma \ref{lem: BE_reduced}}
Since the cost of a local computing device $C(n_t)$ is deterministic, we can calculate it immediately and add it to per-stage cost. Thus, the per-stage cost can be expressed as $$g'(\mathbf{S}_t,\Omega(\mathbf S_t))\triangleq w|\mathcal{U}_E(t)| + p(t) + I_N(t)(1-e_t) C(n_t).$$
Hence,
\begin{align*}
\lim\limits_{T \rightarrow + \infty }\mathbb{E} \bigg[\sum_{t=1}^{T} \gamma^{t-1} g'(\mathbf{S}_t, \Omega(\mathbf{S}_t)) \bigg| \mathbf{\widetilde S}_1 \bigg]=\lim\limits_{T \rightarrow + \infty }\mathbb{E}\bigg[\sum_{t=1}^{T} \gamma^{t-1} g(\mathbf{S}_t, \Omega(\mathbf{S}_t)) \bigg| \mathbf{S}_1 \bigg].
\end{align*}
 Then, the Bellman's equations can be rewritten as 
\begin{align*} \label{eq: IntermedianBE}
V(\widehat{\mathbf S}_t)=\min\limits_{\Omega(\mathbf{S}_t)} \bigg[g'\left(\mathbf{S}_t,\Omega(\mathbf{S}_t)\right) + \sum_{\widehat{\mathbf{S}}_{t+1}}\gamma \Pr\left(\widehat{\mathbf{S}}_{t+1} | \mathbf{S}_{t}, \Omega(\mathbf{S}_t)\right) V(\widehat{\mathbf{S}}_{t+1}) \bigg], 
\end{align*}
where $\widehat{\mathbf S}_t \triangleq (\mathbf S^E_t, \mathbf S^N_t)$. Due to the i.i.d. nature of small-scale fading and new arriving devices, we have following Bellman's equation with reduced state space after taking expectation over the above equation.
\begin{align*}
W(\widetilde{\mathbf{S}}_t)&=\mathbb{E}_{\{\mathcal H_E(t), \mathbf S^N_t|\forall t\}}[V(\widehat{\mathbf S}_t)] \nonumber\\
&=\min\limits_{\Omega(\mathbf{S}_t)}\mathbb{E}_{\{\mathcal H_E(t), \mathbf S^N_t|\forall t\}}\bigg\{g'\left( \mathbf{S}_t,\Omega(\mathbf{S}_t)\right)+ \sum_{\widetilde{\mathbf{S}}_{t+1}}\gamma \Pr\left(\widetilde{\mathbf{S}}_{t+1} | \mathbf{S}_{t}, \Omega(\mathbf{S}_t)\right)W (\widetilde{\mathbf{S}}_{t+1}) \bigg\},\nonumber\\
\end{align*}
where $ {\widetilde{\mathbf{S}}_t}\triangleq \mathbf S^E_t/ \mathcal H_E(t)$.

\section*{Appendix B: Proof of Lemma \ref{lem:W1}, \ref{lem:W2} and \ref{lem:W3}}
\subsubsection{ Proof of Lemma \ref{lem:W1}}
The first term of the right-hand-side of equation (\ref{eq:W_Pi_1}) is the expected cost of the active edge computing devices in $\{m_1,m_2,...,m_{\left[|\mathcal{U}_E(1)|-K\right]^{+}}\}$. The second term is the expected total cost of new active devices which arrive before all the mobile devices in $\{m_1,m_2,...,m_{\left[|\mathcal{U}_E(1)|-K\right]^{+}}\}$ finish uplink transmission. All these new arriving devices will be scheduled for local computing. 
%The third term of equation (\ref{eq:W_Pi}) denotes the expected cost of the new active devices which arrive after all the mobile devices in $\mathcal U_E(1)$ finish uplink transmission. 
Moreover, for sufficiently large input data size, the transmission of one task spans sufficiently many frames. The ergodic channel capacity can be achieved. Hence, equation (\ref{eq:Tk}) holds.

\subsubsection{ Proof of Lemma \ref{lem:W2}}
The first term of the right-hand-side of equation \eqref{eqn:W_2} is the expected power cost of the active edge computing devices in $\{m_{|\left[\mathcal{U}_E(1)|-K\right]^{+}+1},\dots,m_{|\mathcal{U}_E(1)|}\}$. The second term is the sum of expected total delay cost and  total cost of local computing devices arrived during the same time.

\subsubsection{ Proof of Lemma \ref{lem:W3}}
With baseline policy $\Pi$, there are at most $K$ edge computing devices in the period (3).
 In fact, the $ \epsilon_{{\zeta},{\xi}} $-th entry of vector $ \mathbf{v} $ represents the probability that there are $\zeta$ edge computing devices and $ \xi $ segments of the first edge computing device in the uplink transmission queue; 
 the $ \epsilon_{{\zeta},{\xi}}$-th entry of vector $\mathbf c$ is the expected per-stage cost if there are $\zeta$ edge computing devices and $ \xi $ segments of the first edge computing device in the uplink transmission queue;
 the $ (\epsilon_{{\zeta},{\xi}},\epsilon_{{\zeta}',{\xi}'}) $-th entry of matrix $ \bf{\Phi} $ represents the probability that there are ${\zeta}'$ edge computing devices and $ {\xi}' $ segments of the first edge computing device in the uplink transmission queue
  in the next frame, 
given $\zeta$ edge computing devices and $ \xi $ segments of the first edge computing device in the uplink transmission queue in the current frame. Hence, we have the following discussion on $ {\bf\Phi}_{\epsilon_{{\zeta},{\xi}},\epsilon_{{\zeta}',{\xi}'}} $.
\begin{itemize}
	\item {\bf Case 1 (${\zeta}=0, \ {\xi}=0, \ {\zeta}'=0, \ {\xi}'=0$):} Transition from first state ($0$ segment, $0$ edge computing device) to first state means that there is no new active device arrival. Hence  ${\bf\Phi}_{1,1}=1-P_N$.
	
	\item {\bf Case 2 (${\zeta}=0, \ {\xi}=0, \ {\zeta}'=1, \ {\xi}'=d_{\min}, \dots, d_{\max}$):} This means there is a new active device arrival. The probability of a new active device arrival is $P_N$ and the task size of the new active device is uniformly distributed between $d_{\min}$ to $d_{\max}$. Thus, the probability of transiting from first state ($0$ segment, $0$ edge computing device) to $\epsilon_{{\zeta}',{\xi}'}$-th state (${\zeta}'$ edge computing devices, ${\xi}'$ segments in the first edge computing device) for ${\zeta}'=1$  and $ {\xi}'=d_{\min}, \dots, d_{\max}$ is ${\bf\Phi}_{1,\epsilon_{{\zeta}',{\xi}'}} =\frac{P_N}{d_{max}-d_{min}+1}$.

	\item {\bf Case 3 (${\zeta}=1, \ {\xi}=1, \dots, d_{\max}, \ {\zeta}'=0, \ {\xi}'=0$):} This means (i) there is no new active device arrival; (ii) 	the edge computing device transmit ($\xi$) segments within current frame. Hence, we have 
	\begin{align*}
		{\bf\Phi}_{\epsilon_{{\zeta},{\xi}},\epsilon_{{\zeta}',{\xi}'}}
	=(1-P_N)\Pr\left[ \log_2[1+\frac{p_r |h|^2}{\sigma_z^2}]\geq \frac{ \xi b_s}{WT_s}\right]=(1-P_N) \exp\{-\frac{[2^{\xi b_s/(WT_s)}-1]\sigma^{2}_{z}}{p_r} \}. 
	\end{align*}

		\item {\bf Case 4 (${\zeta}=2, \dots, K-1, \ {\xi}=1, \dots, \ d_{\max}, {\zeta}'=\zeta-1, \ {\xi}'=d_{\min}, \dots, d_{\max}$):} This means (i) there is no new active device arrival; (ii)
	the edge computing device transmit $\xi$ segments within current frame; (iii) there are $\xi'$ segments of the second edge computing device. Hence, we have 
	\begin{align*}
		{\bf\Phi}_{\epsilon_{{\zeta},{\xi}},\epsilon_{{\zeta}',{\xi}'}}
	&=\frac{1-P_N}{d_{\max}-d_{\min}+1}\Pr\left[\log_2[1+\frac{p_r |h|^2}{\sigma_z^2}]\geq \frac{\xi b_s}{WT_s} \right] \\ &=\frac{1-P_N}{d_{\max}-d_{\min}+1}\exp\{-\frac{[2^{\xi b_s/(WT_s)}-1]\sigma^{2}_{z}}{p_r}\}.
	\end{align*}

	\item {\bf Case 5 (${\zeta}=1, \dots, K-1, \ {\xi}=1, \dots, d_{\max}, \ {\zeta}'=\zeta+1, \ {\xi}'=1, \dots, \xi$):} This means (i) there is a new active device arrival; (ii)
	the edge computing device transmit ($\xi-{\xi}'$) segments within current frame. Hence, we have
	\begin{align*}
			{\bf\Phi}_{\epsilon_{{\zeta},{\xi}},\epsilon_{{\zeta}',{\xi}'}}
	&=P_N\Pr\left[\frac{(\xi-{\xi}')b_s}{WT_s}\leq \log_2[1+\frac{p_r |h|^2}{\sigma_z^2}]\leq \frac{ (\xi-{\xi}'+1)b_s}{WT_s}\right] \\
 &=P_N\bigg(\exp\{-\frac{[2^{(\xi-{\xi}')b_s/(WT_s)}-1]\sigma^{2}_{z}}{p_r}\}
	-\exp\{-\frac{[2^{(\xi-{\xi}'+1)b_s/(WT_s)}-1]\sigma^{2}_{z}}{p_r}\}\bigg).
	\end{align*}

	\item {\bf Case 6 ( ${\zeta}=1, \dots, K-1, \ {\xi}=1, \dots, d_{\min}-1,  \ {\zeta}'=\zeta, \ {\xi}'=1, \dots, \xi$):} (i) there is no new active device arrival; (ii)
	the edge computing device transmit ($\xi-{\xi}'$) segments within current frame. Hence, we have 
	\begin{align*}
		{\bf\Phi}_{\epsilon_{{\zeta},{\xi}},\epsilon_{{\zeta}',{\xi}'}}
	&=(1-P_N)\Pr\left[\frac{(\xi-{\xi}')b_s}{WT_s}\leq \log_2[1+\frac{p_r |h|^2}{\sigma_z^2}]\leq \frac{ (\xi-{\xi}'+1)b_s}{WT_s}\right] \\
	& =(1-P_N)\bigg(\exp\{-\frac{[2^{(\xi-{\xi}')b_s/(WT_s)}-1]\sigma^{2}_{z}}{p_r}\}
	-\exp\{-\frac{[2^{(\xi-{\xi}'+1)b_s/(WT_s)}-1]\sigma^{2}_{z}}{p_r}\}\bigg).
	\end{align*}

\item {\bf Case 7 (${\zeta}=1, \dots, K-1, \ {\xi}=1, \dots, d_{\min}-1, \ {\zeta}'=\zeta, \ {\xi}'=d_{\min}, \dots, d_{\max}$):} This means (i) there is a new active device arrival; (ii)
the edge computing device transmit $\xi$ segments within current frame; (iii) there are $\xi'$ segments of the second edge computing device. Hence, we have 
\begin{align*}
	{\bf\Phi}_{\epsilon_{{\zeta},{\xi}},\epsilon_{{\zeta}',{\xi}'}}
&=\frac{P_N}{d_{\max}-d_{\min}+1}\Pr\left[\log_2[1+\frac{p_r |h|^2}{\sigma_z^2}]\geq \frac{\xi b_s}{WT_s} \right]\\ &=\frac{P_N}{d_{\max}-d_{\min}+1}\exp\{-\frac{[2^{\xi b_s/(WT_s)}-1]\sigma^{2}_{z}}{p_r}\}.
\end{align*}

\item{\bf Case 8 (${\zeta}=1, \dots, K-1, \ {\xi}=d_{\min}, \dots, d_{\max}, \ {\zeta}'=\zeta, \ {\xi}'=1, \dots, d_{\min}-1$):} (i) there is no new active device arrival; (ii)
the edge computing device transmit ($\xi-{\xi}'$) segments within current frame. Hence, we have 
	\begin{align*}
{\bf\Phi}_{\epsilon_{{\zeta},{\xi}},\epsilon_{{\zeta}',{\xi}'}}
&=(1-P_N)\Pr\left[\frac{(\xi-{\xi}')b_s}{WT_s}\leq \log_2[1+\frac{p_r |h|^2}{\sigma_z^2}]\leq \frac{ (\xi-{\xi}'+1)b_s}{WT_s}\right] \\
& =(1-P_N)\bigg(\exp\{-\frac{[2^{(\xi-{\xi}')b_s/(WT_s)}-1]\sigma^{2}_{z}}{p_r}\}
-\exp\{-\frac{[2^{(\xi-{\xi}'+1)b_s/(WT_s)}-1]\sigma^{2}_{z}}{p_r}\}\bigg).
\end{align*}

	\item {\bf Case 9 (${\zeta}=1, \dots, K-1, \ {\xi}=d_{\min}, \dots, d_{\max}, \ {\zeta}'=\zeta, \ {\xi}'=d_{\min}, \dots, \xi$).} There are two cases: (i) when there is no new active device arrival, 
	the edge computing device transmit ($\xi-{\xi}'$) segments within current frame. (ii) when there is a new active device arrival, 
	the edge computing device transmit $\xi$ segments within current frame and there are $\xi'$ segments of the second edge computing device. 
	Hence, we have 
	\begin{align*}
	{\bf\Phi}_{\epsilon_{{\zeta},{\xi}},\epsilon_{{\zeta}',{\xi}'}}
	&=(1-P_N)\Pr\left[\frac{(\xi-{\xi}')b_s}{WT_s}\leq \log_2[1+\frac{p_r |h|^2}{\sigma_z^2}]\leq \frac{ (\xi-{\xi}'+1)b_s}{WT_s}\right]\\
	&\quad +\frac{P_N}{d_{\max}-d_{\min}+1}\Pr\left[\log_2[1+\frac{p_r |h|^2}{\sigma_z^2}]\geq \frac{\xi b_s}{WT_s} \right]\\
	&  =(1-P_N)\bigg(\exp\{-\frac{[2^{(\xi-{\xi}')b_s/(WT_s)}-1]\sigma^{2}_{z}}{p_r}\}
	-\exp\{-\frac{[2^{(\xi-{\xi}'+1)b_s/(WT_s)}-1]\sigma^{2}_{z}}{p_r}\}\bigg)\\
	& \quad+\frac{P_N}{d_{\max}-d_{\min}+1}\exp\{-\frac{[2^{\xi b_s/(WT_s)}-1]\sigma^{2}_{z}}{p_r}\} 
	\end{align*}

\item {\bf Case 10 (${\zeta}=1, \dots, K-1, \ {\xi}=1, \dots, d_{\max}, \ {\zeta}'=\zeta, \ {\xi}'=\xi+1, \dots, d_{\max}$):} This means (i) there is a new active device arrival; (ii)
the edge computing device transmit $\xi$ segments within current frame; (iii) there are $\xi'$ segments of the second edge computing device. Hence, we have
\begin{align*}
{\bf\Phi}_{\epsilon_{{\zeta},{\xi}},\epsilon_{{\zeta}',{\xi}'}}
&=\frac{P_N}{d_{\max}-d_{\min}+1}\Pr\left[\log_2[1+\frac{p_r |h|^2}{\sigma_z^2}]\geq \frac{\xi b_s}{WT_s} \right]\\ &=\frac{P_N}{d_{\max}-d_{\min}+1}\exp\{-\frac{[2^{\xi b_s/(WT_s)}-1]\sigma^{2}_{z}}{p_r}\}.
\end{align*}

	\item {\bf Case 11 (${\zeta}=K, \ {\xi}=1, \dots, d_{\max}, \ {\zeta}'=K, \ {\xi}'=1, \dots, \xi$):} New arrived device will be scheduled for local computing. Meanwhile, the edge computing device transmit ($\xi-{\xi}'$) segments within current frame. Hence, we have 
\begin{align*}
	{\bf\Phi}_{\epsilon_{{\zeta},{\xi}},\epsilon_{{\zeta}',{\xi}'}}
&=\Pr\left[\frac{(\xi-{\xi}')b_s}{WT_s}\leq \log_2[1+\frac{p_r |h|^2}{\sigma_z^2}]\leq \frac{ (\xi-{\xi}'+1)b_s}{WT_s}\right]\nonumber\\
& =\exp\{-\frac{[2^{(\xi-{\xi}')b_s/(WT_s)}-1]\sigma^{2}_{z}}{p_r}\}
-\exp\{-\frac{[2^{(\xi-{\xi}'+1)b_s/(WT_s)}-1]\sigma^{2}_{z}}{p_r}\}.
\end{align*}

\item {\bf Case 12 (${\zeta}=K, \ {\xi}=1, \dots, d_{\max}, \ {\zeta}'=K-1, \ {\xi}'=d_{\min}, \dots, d_{\max}$):} New arrived device will be scheduled for local computing. Meanwhile, (i) the edge computing device transmit $\xi$ segments within current frame; (ii) there are $\xi'$ segments of the second edge computing device. Hence, we have 
\begin{align*}
	{\bf\Phi}_{\epsilon_{{\zeta},{\xi}},\epsilon_{{\zeta}',{\xi}'}}
&=\frac{1}{d_{\max}-d_{\min}+1}\Pr\left[\log_2[1+\frac{p_r |h|^2}{\sigma_z^2}]\geq \frac{\xi b_s}{WT_s}\right]\\ &=\frac{1}{d_{\max}-d_{\min}+1}\exp\{-\frac{[2^{\xi b_s/(WT_s)}-1]\sigma^{2}_{z}}{p_r}\}.
\end{align*}
\end{itemize}
The entries of the transition probability matrix ${\bf \Phi}$ are given in table \ref{table:Phi}, and other entries are all $0$.

To prove the second equity in equation (\ref{eq:W-Pi-3}), we first show that $||\gamma{\bf \Phi}||<1$, where $||\cdot||$ is the matrix norm. We have $||{\bf \Phi}||=\varrho({\bf \Phi})$ where $\varrho({\bf \Phi})$ is the spectrum radius of ${\bf \Phi}$. Since ${\bf \Phi}$ is a transition probability matrix (stochastic matrix), we have $\varrho({\bf \Phi})=1$ by Perron-Frobenius Theorem \cite{meyer2000matrix}. Also, since the discount factor $\gamma<1$, we have $||\gamma{\bf \Phi}||<1$. Let $\mathbf C_n=\sum_{t=1}^{n}(\gamma{\bf \Phi})^{t-1}$. By dislocation subtraction, we have
$$\mathbf C_n(\mathbf I-\gamma{\bf \Phi})=\mathbf I-(\gamma {\bf \Phi})^{n+1}.$$  
Since $\varrho(\gamma{\bf \Phi})<1$, $(\mathbf I-\gamma{\bf \Phi})$ is nonsingular. Thus,
$$\mathbf C_n=(\mathbf I-\gamma {\bf \Phi})^{-1}-(\gamma {\bf \Phi})^{n+1}(\mathbf I-\gamma {\bf \Phi})^{-1}.$$
Let $n\to\infty$ on both sides of the above equation. We have 
$$\sum_{t=1}^{\infty}(\gamma {\bf \Phi})^{t-1}=(\mathbf I-\gamma{\bf \Phi})^{-1}.$$
Hence, the second equity in equation (\ref{eq:W-Pi-3}) holds.

\section*{Appendix C: Proof of Lemma \ref{lem:gradient}}
The derivation of the gradient of $W_{\Pi}(\widetilde{\mathbf{S}}^{r})$ is given as follows.
\begin{align*}
\frac{\mathrm{d}W_{\Pi}(\widetilde{\mathbf{S}}^{r})}{\mathrm{d}p_r}&=\frac{\mathrm{d}W_{\Pi}^{(3)}(\widetilde{\mathbf{S}}^{r})}{\mathrm{d}p_r}\\
&=\frac{\mathrm{d}\widetilde{\mathbf{v}}^{\mathsf T}(\mathbf I-\gamma {\bf \Phi}(p_r))^{-1}\mathbf{c}(p_r)}{\mathrm{d}p_r}	 \nonumber\\
&=\!\widetilde{\mathbf{v}}^{\mathsf T}\frac{\mathrm{d}(\mathbf I\!-\!\gamma {\bf \Phi} (p_r))^{-1}}{\mathrm{d}p_r}\mathbf c(p_r)+\widetilde{\mathbf{v}}^{\mathsf T}(\mathbf I-\gamma {\bf \Phi}(p_r)\!)^{-1}\!\frac{\mathrm{d}\mathbf c(p_r)}{\mathrm{d}p_r}\nonumber\\
&=-\widetilde{\mathbf{v}}^{\mathsf T}(\mathbf I\!-\!\gamma \! {\bf \Phi}(p_r))^{-1}\frac{\mathrm{d}(\mathbf I-\gamma {\bf \Phi}(p_r\!)\!)}{\mathrm{d}p_r}(\mathbf I\!-\!\gamma {\bf \Phi}(p_r\!)\!)^{-1}\mathbf c(p_r)\nonumber\\
&\quad+\widetilde{\mathbf{v}}^{\mathsf T}(\mathbf I-\gamma {\bf \Phi}(p_r))^{-1}\frac{\mathrm{d}\mathbf c(p_r)}{\mathrm{d}p_r} \nonumber\\
% \end{align*}
% \begin{align*}
&=\widetilde{\mathbf{v}}^{\mathsf T}(\mathbf{I}-\gamma {\bf \Phi}(p_r))^{-1}\biggl[\frac{\mathrm{d}\mathbf{c}(p_r)}{\mathrm{d}p_r}\nonumber\\
&\quad-\frac{\mathrm{d}(\mathbf{I}-\gamma {\bf \Phi}(p_r))}{\mathrm{d}p_r} (\mathbf{I}-\gamma {\bf \Phi}(p_r))^{-1}\mathbf{c}(p_r)\biggr]\nonumber\\
&=\widetilde{\mathbf{v}}^{\mathsf T}(\mathbf{I}-\gamma {\bf \Phi}(p_r))^{-1}\biggl[\frac{\mathrm{d}\mathbf{c}(p_r)}{\mathrm{d}p_r} \nonumber\\
&\quad+\gamma\frac{\mathrm{d}{\bf \Phi}(p_r)}{\mathrm{d}p_r} (\mathbf{I}-\gamma {\bf \Phi}(p_r))^{-1}\mathbf{c}(p_r)\biggr].
\end{align*}

\bibliographystyle{IEEEtran}
\bibliography{reference_shortened}

\end{document}